\definecolor{antiquefuchsia}{RGB}{145,92,131}
\definecolor{burnishbrown}{RGB}{161,122,116}
\definecolor{bluegray}{RGB}{102,153,204}
\definecolor{mediumbg}{RGB}{60,91,112}
\definecolor{darkbg}{RGB}{113,70,35}
\definecolor{mediumbg}{RGB}{79,117,139}
\definecolor{lightbg}{RGB}{173,164,174}
\newcommand{\revise}[1]{\textcolor{black}{#1}}
\begin{document}

\title[Short Title]{Multi-modal Synthesis of Regular Expressions}         


\author{Qiaochu Chen}
\affiliation{
  \institution{University of Texas at Austin}            
  \city{Austin}
  \state{Texas}
  \country{USA}                    
}
\email{qchen@cs.utexas.edu}          

\author{Xinyu Wang}
\affiliation{
  \institution{University of Michigan, Ann Arbor}            
  \city{Ann Arbor}
  \state{Michigan}
  \country{USA}                    
}
\email{xwangsd@umich.edu}         

\author{Xi Ye}
\affiliation{
  \institution{University of Texas at Austin}            
  \city{Austin}
  \state{Texas}
  \country{USA}                    
}
\email{xiye@cs.utexas.edu}          

\author{Greg Durrett}
\affiliation{
  \institution{University of Texas at Austin}            
  \city{Austin}
  \state{Texas}
  \country{USA}                    
}
\email{gdurrett@cs.utexas.edu}          

\author{Isil Dillig}
\affiliation{
  \institution{University of Texas at Austin}            
  \city{Austin}
  \state{Texas}
  \country{USA}                    
}
\email{isil@cs.utexas.edu}          

\newcommand{\sentence}{{\mathcal{L}}}
\newcommand{\examples}{\mathcal{E}}
\newcommand{\posexamples}{\mathcal{E}^{+}}
\newcommand{\negexamples}{\mathcal{E}^{-}}
\newcommand{\posexample}{e^{+}}
\newcommand{\negexample}{e^{-}}
\newcommand{\matches}{\models}
\newcommand{\annot}{\triangleleft}
\newcommand{\hsketch}{\mathcal{S}}
\newcommand{\program}{\mathcal{P}}
\newcommand{\nlabel}{\ell}
\newcommand{\symconst}{\kappa}

\newcommand{\toolname}{{\sc Regel}}
\newcommand{\toolvar}{\toolname-{\sc Pbe}}
\newcommand{\overapprox}{o}
\newcommand{\underapprox}{u}
\newcommand{\exps}{\Pi}

\newcommand{\gensketch}{\textsc{GenSketch}}
\newcommand{\completesketch}{\textsc{Synthesize}}

\algnewcommand\Input{\textbf{input: }}
\algnewcommand\Output{\textbf{output: }}

\newcommand{\assign}{:=}
\newcommand{\prog}{r}
\newcommand{\partialprog}{P}
\newcommand{\sketch}{{\mathcal{S}}}

\newcommand{\worklist}{\emph{worklist}}

\newcommand{\hole}{\square}

\newcommand{\unknown}{\texttt{?}}

\renewcommand{\dots}{\cdot\cdot}
\renewcommand{\ldots}{\dots}

\newcommand{\firstBL}{\toolname-{\sc Enum}}
\newcommand{\secondBL}{\toolname-{\sc Approx}}
\newcommand{\mina}{{\sc AlphaRegex}}

\newcommand{\todo}[1]{{\color{red}{#1}}}

\newcommand{\semantics}[1]{\llbracket{#1}\rrbracket}
\newcommand{\bigsemantics}[1]{\big\llbracket{#1}\big\rrbracket}
\newcommand{\Bigsemantics}[1]{\Big\llbracket{#1}\Big\rrbracket}

\renewcommand{\algref}[1]{Algorithm~\ref{alg:#1}}
\newcommand{\alglabel}[1]{\label{alg:#1}}
\newcommand{\figref}[1]{Figure~\ref{fig:#1}}
\newcommand{\figlabel}[1]{\label{fig:#1}}
\newcommand{\exref}[1]{Example~\ref{exp:#1}}
\newcommand{\exlabel}[1]{\label{exp:#1}}
\newcommand{\defref}[1]{Definition~\ref{def:#1}}
\newcommand{\deflabel}[1]{\label{def:#1}}
\newcommand{\tabref}[1]{Table~\ref{tab:#1}}
\newcommand{\tablabel}[1]{\label{tab:#1}}
\newcommand{\lemmaref}[1]{Lemma~\ref{lem:#1}}
\newcommand{\lemmalabel}[1]{\label{lem:#1}}
\newcommand{\secref}[1]{Section~\ref{sec:#1}}
\newcommand{\seclabel}[1]{\label{sec:#1}}
\newcommand{\theoremref}[1]{Theorem~\ref{thm:#1}}
\newcommand{\theoremlabel}[1]{\label{thm:#1}}

\newcommand{\deepregex}{{\sc DeepRegex}}

\newcommand{\irule}[2]{\mkern-2mu\displaystyle\frac{#1}{\vphantom{,}#2}\mkern-2mu}

\begin{abstract}
In this paper, we propose a multi-modal synthesis technique for automatically constructing regular expressions (\emph{regexes}) from a combination of examples and natural language. Using multiple modalities  is useful in this context because natural language alone is often highly ambiguous, whereas examples in isolation are often not sufficient for conveying  user intent. Our proposed technique first parses the English description into a so-called \emph{hierarchical sketch} that guides our programming-by-example (PBE) engine. Since the hierarchical sketch captures crucial hints, the PBE engine can leverage this information to both prioritize the search as well as make useful deductions for pruning the search space.

We have implemented the proposed technique in a tool called \toolname\ and  evaluate it on over three hundred regexes.  Our evaluation shows that \toolname\  achieves 80\% accuracy whereas the NLP-only and PBE-only baselines achieve  43\% and 26\% respectively.  We also compare our proposed PBE engine against an adaptation of \mina, a state-of-the-art regex synthesis tool, and show that our proposed PBE engine is an order of magnitude faster, even if we adapt the search algorithm of \mina\ to leverage the sketch. Finally, we conduct a user study involving 20 participants and show that users are twice as likely to successfully come up with the desired regex using \toolname\ compared to without  it.

\end{abstract}

\begin{CCSXML}
<ccs2012>
<concept>
<concept_id>10011007.10011074.10011092.10011782</concept_id>
<concept_desc>Software and its engineering~Automatic programming</concept_desc>
<concept_significance>500</concept_significance>
</concept>
<concept>
<concept_id>10003752.10003766.10003776</concept_id>
<concept_desc>Theory of computation~Regular languages</concept_desc>
<concept_significance>300</concept_significance>
</concept>
</ccs2012>
\end{CCSXML}
\ccsdesc[500]{Software and its engineering~Automatic programming}
\ccsdesc[300]{Theory of computation~Regular languages}

\keywords{Program Synthesis, Programming by Natural Languages, Programming by Example, Regular Expression}  

\maketitle

\section{Introduction} \label{sec:intro}

As a convenient mechanism for matching patterns in text data, regular expressions (or \emph{regexes}, for short) have found numerous applications ranging from search and replacement to input validation. In addition to being heavily used by programmers, regular expressions have also gained popularity among computer end-users. For example, many text editors, word processing programs, and spreadsheet applications now provide support for performing search and replacement using regexes.  However, despite their potential to dramatically simplify various tasks, regular expressions have a reputation for being quite difficult to master.

Due to the practical importance of regexes, prior research has proposed methods to automatically generate regular expressions from high-level user guidance. For example, several techniques generate regexes from natural language descriptions~\cite{KB13,deepregex,semregex}, while others  synthesize regexes from positive and negative examples~\cite{flashfill, fidex, lee}.
While these techniques have made some headway in regex synthesis, 
existing NLP-based techniques have relatively low accuracy even for stylized English descriptions~\cite{deepregex},
whereas example-based synthesizers  impose severe restrictions  on  the kinds of regular expressions they can synthesize (e.g., {restrict} the use of Kleene star~\cite{fidex,flashfill} or consider only a binary alphabet~\cite{lee}).

A central premise of this work is that both modalities of information, namely examples and natural language, are complementary and simultaneously useful for synthesizing regular expressions. As evidenced by numerous regex-related questions posted on online forums, most users communicate their intent using a combination of natural language and positive/negative examples. In particular, a common pattern is that users typically describe the high-level task using natural language,  but they also give positive and negative examples to clarify any ambiguities present in that description.

Motivated by this observation, this paper presents a new \emph{multi-modal} synthesis algorithm that utilizes both examples and English text to generate the target regex.  
The key idea underlying our method is to  parse the natural language description into a \emph{hierarchical sketch} (or \emph{h-sketch} for short) that is  used to guide a programming-by-example (PBE) engine. Since hierarchical sketches capture key hints present in the English description, they make it much easier for our PBE technique to find regexes that match the user's intent. Furthermore, because the hierarchical nature of these sketches closely reflects the compositional structure of the natural language they are derived from, it is feasible to  obtain the basic scaffolding of the target regex using non-data-hungry NLP techniques like \emph{semantic parsing}~\cite{mooney,zettlemoyer}.

In order to effectively use the hints derived from natural language, our technique leverages a new PBE algorithm for the regex domain. In particular, our  PBE technique uses the hints provided by the h-sketch to both prioritize its search and also perform useful deductive reasoning. In addition, our PBE technique  leverages so-called \emph{symbolic regular expressions} to group similar regexes during the search process and uses an SMT solver to concretize them.

We have implemented the proposed approach in a tool called \toolname\footnote{Stands for Regular Expression Generation from Examples and Language.  }
and compare it against relevant baselines  on \emph{over 300 regexes} collected from two different sources. Our evaluation demonstrates 
 the advantages of multi-modal synthesis compared to both \deepregex, a state-of-the-art NLP tool, as well as a pure PBE approach. In particular, \toolname\ can find the intended regex in 80\% of the cases, whereas the pure PBE and NLP baselines can synthesize only 26\% and 43\% of the benchmarks respectively. In our evaluation, we also compare \toolname's PBE engine  against an adaptation of \mina, a state-of-the-art PBE tool for regex synthesis, and demonstrate an order of magnitude improvement in terms of sketch completion time.  Finally,
we perform a user study targeting real-world regex construction tasks  and show that users are twice as likely to construct the intended regex using \toolname\ than without it.



In summary, this paper makes the following  contributions:
\begin{itemize}[leftmargin=*,noitemsep,topsep=0pt]
\item We describe a new multi-modal synthesis technique for generating regexes from examples and natural language.
\item We introduce \emph{hierarchical sketches}  and develop a semantic parser to  generate h-sketches from English descriptions.
\item We present a new PBE engine for regular expression synthesis that (1) leverages hints in the h-sketch to guide both the search and deduction, and (2) utilizes the concept of \emph{symbolic regexes} to further prune the search space.
\item We evaluate our technique on over 300 regexes and  empirically demonstrate its advantages against multiple baselines on  two different data sets. 
\item We conduct a user study and run statistical significance tests to evaluate  the benefits of \toolname\ to prospective users. 
\end{itemize}

\section{Overview}\label{sec:overview}

\begin{figure*}
\begin{center}
\begin{minipage}{0.25\textwidth}
\begin{center}
\includegraphics[scale=0.5]{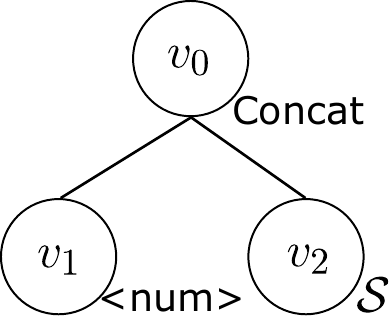}
\end{center}
\vspace{15pt}
\captionsetup{font={scriptsize}}
\caption{\scriptsize A partial regex example where $\sketch$ represents the h-sketch $\hole_2\{\texttt{<,>},\texttt{RepeatRange}(\texttt{<num>,}\texttt{1,3})\}$. 
}\label{fig:partial}
\end{minipage}
\hspace{10pt}
\begin{minipage}{0.35\textwidth}
\begin{center}
\includegraphics[scale=0.5]{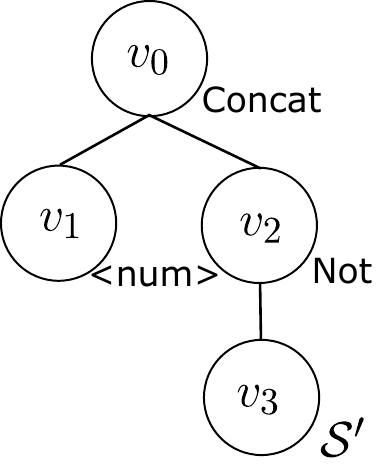}
\end{center}
\captionsetup{font={scriptsize}}
\caption{\scriptsize A partial regex expanded from Figure~\ref{fig:partial} where $\sketch'$ stands for 
 $\hole_1\{\texttt{<,>},\texttt{RepeatRange}(\texttt{<num>,1,3})\}$
 . 
}\label{fig:partial-expanded}
\end{minipage}
\hspace{10pt}
\begin{minipage}{0.3\textwidth}
\begin{center}
\includegraphics[scale=0.5]{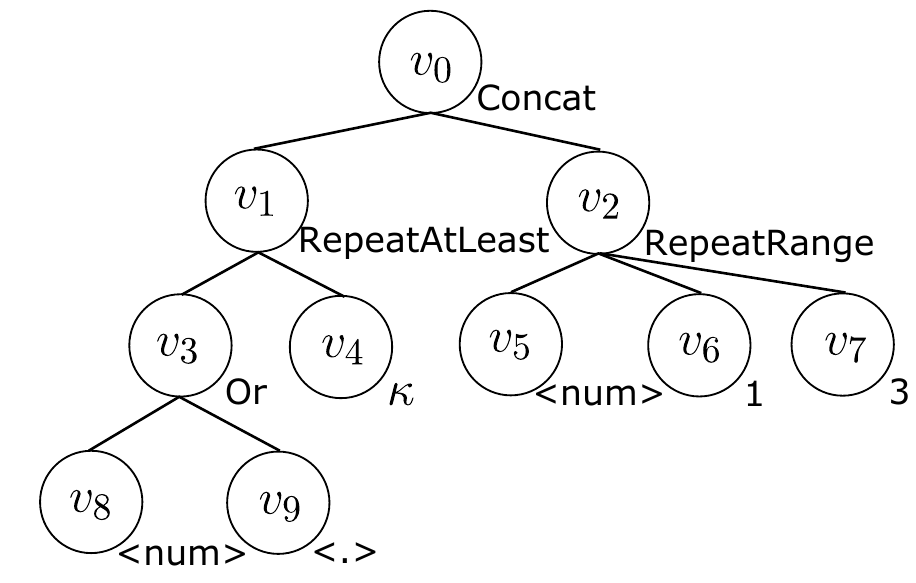}
\end{center}
\captionsetup{font={scriptsize}}
\caption{\scriptsize A symbolic regex example.}\label{fig:symbolic}
\end{minipage}
\end{center}
\end{figure*}

In this section, we give a high-level overview of our technique with the aid of a motivating example. Consider the task of writing a regular expression to match strings that correspond to decimal numbers of the form  $x.y$ where $x$ (resp. $y$) is an integer with at most $15$ (resp. $3$) digits. Furthermore, this regex should  accept strings that correspond to $15$ digit integers (i.e., where the $.y$ part is missing).

As posted in a StackOverflow post,\footnote{
https://stackoverflow.com/questions/19076566/need-regular-expression-that-validate-decimal-18-3} the user  explains this task using the following English description $\mathcal{L}$:  
\emph{
        ``I need a regular expression that validates Decimal(18, 3), which means the max number of digits before comma is 15 then accept at max 3 numbers after the comma.''
}
The user also provides some positive examples $\mathcal{E}^+$ and negative examples~$\mathcal{E}^-$: \\

\begin{center}
\vspace{-0.1in}
\small
\begin{tabular}{|c|c|}
\hline
{\bf Positive Examples} & {\bf Negative Examples}\\ \hline
123456789.123 & 1234567891234567 \\
123456789123456.12 & 123.1234 \\
12345.1 & 1.12345 \\
123456789123456 & .1234 \\
\hline
\end{tabular}
\end{center} 
\vspace{0.1in}

Here, the user's English description is not only ambiguous, but also somewhat misleading. First, the user means to say ``period'' instead of ``comma'', and, second, it is not clear from the description whether a pure integer such as ``123'' should be allowed. On the other hand, the string examples alone are also not sufficient for completely understanding user intent. For instance, by looking at the examples in isolation, it is  difficult to tell whether digit 0 is allowed or not. 
\

To synthesize the target regex based on the user's description and examples, our method first uses a semantic parser~\cite{berant} to ``translate'' the natural language description into a \emph{hierarchical sketch (h-sketch)} that  captures the high-level structure of the target regex. Given the English description $\mathcal{L}$, our semantic parser generates a \emph{ranked} list of such h-sketches, one of which is given below:
\small
\begin{equation}\label{eq:hsketch}
\resizebox{.92\hsize}{!}{
$\texttt{Concat} \Big( \hole{\{\texttt{<num>},\texttt{<,>}\}},\hole\{\texttt{RepeatRange}(\texttt{<num>,1,3}), \texttt{<,>}\} \Big)$
}
\end{equation}
\normalsize

In this h-sketch, the symbol $\hole$ denotes an unknown regex, and the notation $\hole\{\sketch_1, \ldots, \sketch_n\}$ indicates that the unknown regex  $\hole$ should contain \emph{at least} one of the components (``hints'') $\sketch_1, \ldots, \sketch_n$ as a leaf node. Thus, looking at this h-sketch, we can make the following observations:
\begin{enumerate}[leftmargin=*]
\item Since the top-level operator is {\tt Concat}, the regular expression is of the form $\texttt{Concat}(R_1, R_2)$. 
\item $R_1$ should contain \emph{either} a digit (i.e., {\tt <num>}) \emph{or} a comma (i.e., {\tt <,>}) as a component. 
\item $R_2$ should contain \emph{either} a 1-3 digit number (i.e.,\\ {\tt RepeatRange(<num>,1,3)}) \emph{or} a comma. 
\end{enumerate}
While this sketch is far from perfect,  it still contains useful sub-regexes  that do indeed occur in the target regex.

Given a hierarchical sketch $\hsketch$ like the one from Eq.~\ref{eq:hsketch}, our PBE engine tries to find a regex that is both a valid completion of $\hsketch$ and also consistent with the provided  examples. From a high level, the synthesizer performs top-down sketch-guided enumerative search over \emph{partial regexes} represented as abstract syntax trees (ASTs). For instance, Figure~\ref{fig:partial} shows an example partial regex where nodes are  labeled  with h-sketches, operators, or character classes. At every step, the synthesizer picks a node labeled with a sketch and decides how to expand that node. For instance, Figure~\ref{fig:partial-expanded} shows an expansion of the partial regex from Figure~\ref{fig:partial} where the node $v_2$ has been instantiated with the \texttt{Not} operator which now has a new child $v_3$ labeled with a new h-sketch $\hsketch'$. 
~\footnote{In \figref{partial} and ~\figref{partial-expanded}, the notation $\hole_k$ indicates that the depth of the unknown regex is at most $k$. Thus, when we derive the new sketch for node $v_3$, we use the same sketch labeling $v_2$ but with depth $1$ instead of $2$. }

The synthesis engine underlying \toolname\ leverages two  ideas that help make it practical. 
{First, similar to prior work~\cite{lee}, \toolname\ uses lightweight deductive reasoning to prune away infeasible partial regexes by constructing over- and under-approximations. 
However, with our h-sketches, we are able to construct these approximations using hints obtained from the natural language and therefore perform more precise reasoning.}
Specifically, given a partial regex $\partialprog$, our PBE engine uses  the h-sketch to construct a pair of regular expressions $\langle o, u \rangle$ such that 
(1) $o$ accepts every string that \emph{any} completion of $\partialprog$ can match, and 
(2) $u$ accepts only those strings that \emph{every} completion of $\partialprog$ accepts. 
For instance, the under-approximation for the partial regex from Figure~\ref{fig:partial-expanded}~is:
\small
\begin{equation}\label{eq:underapprox}
\resizebox{.9\hsize}{!}{
$\texttt{Concat} \Big( \texttt{<num>},\texttt{Not} \big( \texttt{Or}(\texttt{<,>},\texttt{RepeatRange}(\texttt{<num>,1,3})) \big) \Big)$
}
\end{equation}
\normalsize
Since this regex recognizes the negative example $``123456789\\12345467"$, \emph{any} completion of the partial regex from Figure~\ref{fig:partial-expanded} must also recognize this negative example. Thus, we can reject this partial regex without compromising completeness.



The second  idea underlying our synthesis algorithm is to introduce \emph{symbolic regexes}  to prune large parts of the search space. In particular, our regex DSL has several constructs (e.g., \texttt{RepeatRange}) that take integer constants as arguments, but explicitly enumerating possible values of these integer constants during synthesis can be quite inefficient. To deal with this challenge,
our  algorithm  introduces a so-called \emph{symbolic integer} $\kappa$ that represents \emph{any} integer value. Now, given a \emph{symbolic regex} with symbolic integers, our method  generates an SMT formula $\phi$ over the symbolic integers $\kappa_1, \ldots, \kappa_n$ such that $\kappa_i$ can be instantiated with constant $c_i$ only if $c_1, \ldots, c_n$ is a model of $\phi$. For instance, consider the symbolic regex  from Figure~\ref{fig:symbolic}. By looking at each of the sub-regexes of Figure~\ref{fig:symbolic}, we can make the following deductions:

\begin{itemize}[leftmargin=*]
\item Since $v_3$'s arguments (an {\tt Or} node) are both single characters, any string matched by $v_3$ must have length $1$.
\item Because  {\tt RepeatAtLeast}  concatenates at least $\kappa$ copies of its first argument, the length of any string matched by $v_1$ is at least $\kappa$.
\item Finally, the length of any string matched by $v_0$ must be at least $\kappa+1$ because $v_0$'s first (resp. second) argument has length at least $\kappa$ (resp. $1$).
\end{itemize}

Now, since there is  a positive example (namely, $12345.1$) of length $7$, this gives us the  constraint  $\kappa + 1 \leq 7$  (i.e., $\kappa \leq 6$) on the symbolic integer $\kappa$. Thus, rather than enumerating all possible integers, our approach instead generates an SMT formula and solves for possible values of the symbolic integers. However, because the generated SMT formula $\Phi$  over-approximates ---rather than precisely encodes---  regex semantics, not every model of $\phi$ corresponds to a regex that is consistent with the examples. Thus, our approach uses SMT solving to prune infeasible symbolic regexes rather than directly solving for the unknown constants (e.g., as is done in \textsc{Sketch}~\cite{solarthesis} and its variants~\cite{jha2010, gulwani2011, ahish2015, emina}).

Using these ideas, our synthesis algorithm is able to  synthesize the following correct regex:
\small
\begin{align*}
&\texttt{Concat} \Big( \texttt{RepeatRange}(\texttt{<num>,1,15}),\\[-0.8em]
& \ \ \ \ \ \ \ \ \ \ \ \ \ \ \texttt{Optional} \big( \texttt{Concat}(\texttt{<.>},\texttt{RepeatRange}(\texttt{<num>,1,3})) \big) \Big) 
\end{align*}
\normalsize




\section{Regex Language} 

\begin{figure}
\[
\small
\begin{array}{lll}
\prog  := \ c \ | \ \epsilon \ | \ \emptyset \ \\ 
    \ \ \ \ \ | \ \texttt{StartsWith} ( \prog ) \ | \ \texttt{EndsWith} ( \prog ) \ | \  \texttt{Contains} ( \prog ) \ | \ \texttt{Not} ( \prog )  \\  
    \ \ \ \ \ | \ \texttt{Optional} (\prog) \ | \ \texttt{KleeneStar} ( \prog )  \\
    \ \ \ \ \ | \ \texttt{Concat} ( \prog_1, \prog_2 ) \ | \  \texttt{Or} ( \prog_1, \prog_2 ) \ | \  \texttt{And} (\prog_1, \prog_2)  \\ 
    \ \ \ \ \ | \ \texttt{Repeat} ( \prog, k ) \ | \  \texttt{RepeatAtLeast} (\prog, k) \ | \ \texttt{RepeatRange} ( \prog, k_1, k_2 ) \\ 
\end{array}
\]
\vspace{-5pt}
\caption{Regex DSL. Here, $k \in \mathbb{Z}^+$ and $c$ is a character class}
\figlabel{dslsyntax}
\end{figure}

Following prior work~\cite{deepregex}, we express regular expressions in the simple DSL shown in \figref{dslsyntax}.~\footnote{The precise semantics of this DSL are provided in the Appendix.}  While most constructs in this DSL are just syntactic sugar for standard regular expressions, the  \texttt{And} and \texttt{Not} operators may require performing intersection and complement at the automaton level. However, any ``program'' in our DSL is expressible as a standard regex, and, furthermore, several regex libraries~\cite{pattern,ruby} already directly support some forms of \texttt{And} and \texttt{Not}. In what follows, we briefly go over the regex constructs shown in  \figref{dslsyntax}.

\paragraph{Character class.} A character class $c$ is either a single character (e.g., {\tt <a>}, {\tt <1>}, {\tt <,>}) or a predefined family of characters. For instance, the character class {\tt <num>} matches any digit \texttt{[0-9]},  {\tt <let>} matches any letter {\tt [a-zA-Z]}, and {\tt <cap>} and {\tt <low>} match upper and lower case letters respectively.  We also have a character class {\tt <any>} that matches any character, {\tt <alphanum>} matches alphanumeric characters, and {\tt <hex>} matches hexadecimal characters.  

\paragraph{Containment.} The DSL operator {\tt StartsWith}($\prog$) (resp. \\{\tt EndsWith}($\prog$)) evaluates to true on string $s$ if there is a prefix (resp. suffix) of $s$ that matches  $\prog$. Similarly, {\tt Contains}($\prog$) evaluates to true on $s$ if any substring of $s$ matches $\prog$.

\paragraph{Concatenation.} The operator {\tt Concat}($r_1, r_2$) evaluates to true on string $s$ if $s$ is a  concatenation of two strings $s_1, s_2$ that match $r_1, r_2$ respectively.

\paragraph{Logical operators.} The operator {\tt Not}($r$) matches a string $s$ if  $s$ does not match $r$.  Similarly,  {\tt And}($r_1, r_2$) (resp. {\tt Or}($r_1, r_2$)) matches $s$ if $s$ matches both (resp. either) $s_1$ and (resp. or) $s_2$. The  construct {\tt Optional}($r$) is syntactic sugar for ${\tt Or}(\epsilon, r)$.

\paragraph{Repetition.} The construct {\tt Repeat}($r, k$) matches string $s$ if $s$ is a concatenation of exactly $k$ strings $s_1, \ldots, s_k$ where each $s_i$ matches $r$. {\tt RepeatRange}($r, k_1, k_2$) matches string $s$ if there exists some $k \in [k_1, k_2]$ such that  {\tt Repeat}($r, k$)  matches $s$. 
Finally, {\tt RepeatAtLeast}($r, k$) is just syntactic sugar for {\tt RepeatRange}($r, k, \infty$), and {\tt KleeneStar}($r$) is equivalent to {\tt Or}($\epsilon$, {\tt RepeatAtLeast}($r, {\tt 1}$))). 
Note that  operators in the {\tt Repeat} family require every integer value $k$ to be a positive number.

\section{Hierarchical Sketches}
\seclabel{sketchlang}

\begin{figure}
    \[
    \small
    \begin{array}{lll}
    \sketch & :=  \ \hole_d \{ \overline{\sketch} \} & \text{(constrained hole)}   \\  
    & \ \  | \ \ \ \texttt{f} (\overline{\sketch}) & \text{(operator without symbolic integer)}  \\ 
    & \ \  | \ \ \ \texttt{g} (\sketch, \overline{\kappa}) & \text{(operator with symbolic integer)} \\ 
    & \ \  | \ \ \   r & \text{(regex)} \\ 
    \end{array}
    \]
    \vspace{-10pt}
    \caption{Syntax of hierarchical sketch language where $\prog$ is a concrete regex and $\kappa_i$ is a symbolic integer.}
    \figlabel{sketchsyntax} 
    \end{figure}

In this section, we  present the syntax  and semantics  of hierarchical sketches (h-sketches) that we derive from the natural language.
Intuitively, an h-sketch represents a \emph{family} of regexes that  conform to a high-level structure.

As shown in \figref{sketchsyntax}, our h-sketch language extends our regex DSL by allowing a  ``constrained hole'' construct.
A constrained hole, denoted $\hole_{d} \{ \overline{\sketch} \}$, is an unknown regex that is parametrized with a positive integer $d$ and a set of nested h-sketches $\overline{\sketch}$. 
Specifically, regex $\prog$ belongs to the space of regexes defined by $\hole_{d} \{ \overline{\sketch}\}$ if one of the ``leaf'' nodes of $\prog$ conforms to $\sketch_i$ and $\prog$ has depth at most $d$ (when  $\sketch_i$ is viewed as a ``leaf node''). Observe that constrained holes can be arbitrarily nested, which is why  these sketches are \emph{hierarchical}.


In addition to constrained holes,   h-sketches can also contain operators in our regex DSL. 
For example, an h-sketch can be of the form $\texttt{f}(\overline{\sketch})$ where $\texttt{f}$ denotes a DSL operator    outside of the \texttt{Repeat} family (e.g., $\texttt{Concat}$). Semantically, $\texttt{f}(\overline{\sketch})$ represents the set of  regexes $\texttt{f}(\overline{\prog})$ where we have $\prog_i \in \semantics{\sketch_i}$. Our h-sketches can also be of the form $\texttt{g}(\sketch, \overline{\kappa})$ where \texttt{g} is a construct in the \texttt{Repeat} family and $\kappa$'s are so-called \emph{symbolic integers}.  
The set of programs defined by $\texttt{g}(\sketch, \overline{\kappa})$ includes all programs of the form $\texttt{g}(\prog, \overline{k})$ where we have $\prog \in \semantics{\sketch}$ and $k_i$ is any positive integer.  
Finally, our h-sketch language also includes \emph{concrete} regular expressions (without holes), and the semantics provided in ~\figref{sketchsemantics} summarize this discussion.

\begin{figure}
    \footnotesize 
\[
\begin{array}{l}
\vspace{5pt}
\bigsemantics{\prog} = \{ \prog \} \\
\vspace{5pt}
\bigsemantics{ \texttt{f} (\overline{\sketch})} = \big\{ \  \texttt{f} (\overline{\prog})  \ | \ \forall_{i \in |\overline{\sketch}|} \ \prog_i \in \semantics{\sketch_i} \  \big\} 
\\ 
\vspace{5pt}
\bigsemantics{ \texttt{g} (\sketch, \overline{\kappa})} = \big\{ \  \texttt{g} (\prog, \overline{k})  \ | \ \prog \in \semantics{\sketch}, \forall_{i \in |\overline{\kappa}|} \  k_i \in \mathbb{N}   \  \big\} 
\\  
\vspace{10pt}
\bigsemantics{\hole_d \{ \overline{\sketch}\}} =  \left\{ 
\begin{array}{lr}
\hspace{-5pt}
\begin{array}{l}
\bigcup\limits_{i \in |\overline{\sketch}|} \ \semantics{\sketch_i} \hspace{50pt}  d = 1  
\end{array}
\\ \\ 
\hspace{-5pt}
\begin{array}{ll}
\hspace{12pt} 
\vspace{5pt}
\bigcup\limits_{i \in |\overline{\sketch}|} \ \semantics{\sketch_i} \hspace{37pt} d > 1 
\\ 
\vspace{-5pt}
\cup 
\bigcup\limits_{\texttt{f} \in \mathcal{F}_n} \bigcup\limits_{1 \leq i \leq n } 
\bigsemantics{ \texttt{f} ( \hspace{-3pt} \underbrace{l, \dots, l}_{i-1 \text{ times}} \hspace{-3pt} , \hole_{d-1}\{\overline{\sketch}\}, \hspace{-5pt} \underbrace{l, \dots, l}_{n-i \text{ times}} \hspace{-4pt} )}\\\\ \hspace{50pt}  \text{where } l = \hole_{d-1} \mathcal{C}  \cup \{\overline{\sketch}\} \\ 
\\ 
\cup 
\bigcup\limits_{ \texttt{g} \in \mathcal{G}_{n}} 
\bigsemantics{ \texttt{g} ( \hspace{1pt} \hole_{d-1}\{\overline{\sketch}\}, \overline{\kappa} \hspace{1pt} )} 
\end{array}
\end{array}
\right.
\end{array}
\]
\vspace{-15pt}
\caption{Semantics of h-sketches.  $\texttt{g} \in \mathcal{G}_n$  (resp. $\texttt{f} \in \mathcal{F}_n$) is an n-ary operator in (resp. outside of) the \texttt{Repeat} family.  }
\figlabel{sketchsemantics} 
\end{figure}

\begin{example}
The program $\texttt{Concat}(\texttt{<num>}, \texttt{Contains}(\texttt{<,>}))$ is in the language of the h-sketch $\texttt{Concat} \big( \hole_1{\{\texttt{<,>}, \texttt{<num>}\}},\\\hole_2\{\texttt{<,>},\texttt{RepeatRange}(\texttt{<num>,1,3})\} \big)$.
\end{example}

\paragraph{Remark.} While  constrained holes in ~\figref{sketchsyntax} are explicitly parametrized by an integer $d$ to facilitate defining h-sketch semantics, the sketches produced by our semantic parser do not have this explicit integer $d$. Instead, $d$ should be thought of as a configurable parameter  that determines the depth of the search tree explored by the PBE engine.

\section{Regex Synthesis from  H-Sketches}
\seclabel{synthesis}

In this section, we describe our synthesis algorithm that generates a regex  from an h-sketch $\sketch$ and a set of positive and negative examples, $\posexamples$ and  $\negexamples$. The output of the synthesis procedure is either $\bot$ which indicates an unsuccessful synthesis attempt or a regex $\prog$ such that:
\[
  \hspace{-5pt}
  \small
\begin{array}{lll}
(1) \ \prog \in \semantics{\sketch} &
(2) \  \forall s \in \posexamples. \  \semantics{\prog}_s = \emph{true}  &
(3) \ \forall s \in \negexamples. \ \semantics{\prog}_s = \emph{false} 
\end{array}
\]

Our synthesis procedure is given in ~\figref{sketchcompletion}. At a high-level,  $\completesketch$  maintains a worklist of \emph{partial regexes} and keeps growing this worklist by expanding the \emph{abstract syntax tree} (AST) representation of a partial regex.

  \begin{definition}{\bf (Partial regex)} A \emph{partial regex} $\partialprog$ is a  tree $(V, E, A)$ where $V$ is a set of vertices, $E$ is a set of directed edges, and $A$ is a mapping from each node $v \in V$ to a label $\nlabel$, which is either (1) a DSL construct (e.g., character class or operator), (2) a symbolic integer $\symconst$, or (3) a hierarchical sketch $\hsketch$. 
\end{definition}


In the remainder of this section, we use the term \emph{symbolic regex} to denote a partial regex where all of the node labels are either DSL constructs or symbolic integers (not an h-sketch), and we use the term \emph{concrete regex} to denote a partial regex where all node labels are DSL constructs. Thus, every concrete regex corresponds to a program written in the regex DSL from ~\figref{dslsyntax}. Given a partial regex $\partialprog$, we write $\textsf{IsConcrete}(\partialprog)$ to denote that $\partialprog$ is a concrete regex and $\textsf{IsSymbolic}(\partialprog)$ to indicate that $\partialprog$ is a symbolic (but not concrete) regex. Finally, we refer to any node whose corresponding label is an h-sketch as an \emph{open node}.

\begin{example}
The partial regex shown in Figure~\ref{fig:symbolic} is a symbolic (but not concrete) regex. On the other hand, the partial regexes from Figures~\ref{fig:partial} and ~\ref{fig:partial-expanded} are neither symbolic nor concrete because the nodes labeled with $\sketch$ are \emph{open}.
\end{example}

\paragraph{Notation.} Given a partial regex $\partialprog$ represented as an AST,   we write \textsf{Edges}($\partialprog$) to denote the set of all edges in $\partialprog$,  \textsf{Root}($\partialprog$) to denote the root node, and \textsf{Subtree}($\partialprog, v$) to denote the subtree of $\partialprog$ rooted at node $v$.  Given a node $v$, we write $v:\nlabel$, to denote that the label of $v$ is $\nlabel$.
Adding a node $v:\nlabel$ to $\partialprog$ is denoted as $\partialprog[v \annot \nlabel]$ (in case $v$ already exists in $\partialprog$, it updates $v$'s label to be $\nlabel$). 
Furthermore, adding multiple nodes $v_1:\nlabel_1, \dots, v_n:\nlabel_n$  is denoted as $\partialprog[v_1 \annot \nlabel_1, \dots, v_n \annot \nlabel_n]$, and we assume that $(v_1, v_2), \dots, (v_1, v_n)$ are added as edges to $P$ if it does not already contain them.

\begin{figure}
\small
\begin{algorithm}[H]
\begin{algorithmic}[1]
\Procedure{$\completesketch$}{$\sketch, \posexamples, \negexamples$}
\Statex \Input{an h-sketch $\sketch$, positive and negative examples $\posexamples$, $\negexamples$}
\Statex \Output{a regex consistent with $\sketch$, $\posexamples$ and $\negexamples$, or $\bot$}
\vspace{0.05in}
\State $\partialprog_0 \assign ( v_0, \emptyset, [v_0 \annot \sketch] )$; \ $\worklist \assign \{ \partialprog_0 \}$; 
\vspace{0.05in}
\While{$\worklist \neq \emptyset$}
\State $\partialprog \assign \worklist.\textsf{remove()}$; 
\If{\textsf{IsConcrete}($\partialprog$)}
\If{\textsf{IsCorrect}($\partialprog, \posexamples, \negexamples$)}
\Return $\partialprog$;  
\EndIf
\ElsIf{\textsf{IsSymbolic($\partialprog$)}} 
\State $\worklist \assign \worklist \cup \textsc{InferConstants}(\partialprog, \posexamples, \negexamples)$; 
\Else 
\State $(v, \sketch) \assign \textsf{SelectOpenNode}(\partialprog)$;
\State $\worklist' \assign \textsc{Expand}(\partialprog, v, \sketch)$;
\ForAll{$\partialprog' \in \worklist'$} 
\If{\textsc{Infeasible($\partialprog', \posexamples, \negexamples$)}} 
\State $\worklist'.\textsf{remove}(\partialprog')$;
\EndIf
\EndFor
\State $\worklist \assign \worklist \cup \worklist'$;
\EndIf 
\EndWhile 
\State \Return $\bot$;
\EndProcedure
\end{algorithmic}
\end{algorithm}
\vspace{-15pt}
\caption{Synthesis algorithm for generating a regex from an h-sketch and a set of positive/negative examples.}
\figlabel{sketchcompletion} 
\vspace{-6pt}
\end{figure}

 With this notation in place, we now explain the {\sc Synthesize} procedure from \figref{sketchcompletion}  in more detail. The algorithm first initializes the worklist to be the singleton $\{\partialprog_0\}$, where $\partialprog_0$ is a partial regex with a single node $v_0$ labeled with the input sketch $\sketch$ (line 2). The loop in lines 3--15 dequeues one of the partial regexes $\partialprog$ from the worklist and processes it based on whether it is concrete, symbolic, or neither. If it is concrete  (line 5), we return $\partialprog$ as a solution if it is consistent with the examples (line 6).
 
On the other hand, if $\partialprog$ is symbolic (line 7), we invoke a procedure called {\sc InferConstants} (described in Section~\ref{sec:inferConstants}) that instantiates the symbolic integers in $\partialprog$ with integer constants (line 8). 
As mentioned in Section~\ref{sec:overview}, {\sc InferConstants} should be viewed as merely a way of \emph{pruning} infeasible programs, so the  regexes produced by {\sc InferConstants} are \emph{not} guaranteed to  satisfy the examples. Thus, the regexes produced by {\sc InferConstants} still have to be checked for consistency with the examples in future iterations.
 
Lines 10-15 of the {\sc Synthesize} algorithm deal with the case where the dequeued partial regex is neither concrete nor symbolic (i.e., $\partialprog$ has at least one open node). In this case, we pick one of the open nodes  $v$ in $\partialprog$ and expand it according to the hints contained in the h-sketch labeling $v$. Specifically, the {\sc Expand} function from line 11 is described  in \figref{expandalg} using inference rules of the  form
$
 v : \sketch \vdash \partialprog \leadsto \exps 
 $.
 The meaning of this judgement is that we obtain a new set of partial regexes $\exps$ by expanding node $v$ according to  h-sketch $\sketch$ . Intuitively, given a node $v$ labeled with sketch $\hole_{d}\{\overline{\sketch}\}$, the inference rules enforce  that \emph{at least} one  descendant of $v$ must correspond to a regex in the languages of $\overline{\sketch}$.

 \begin{figure}
\small 
\[
\hspace{-15pt}
\begin{array}{cr}
\begin{array}{cc}
\irule{
\begin{array}{c}
n = |\overline{\sketch}|  \quad \quad
\exps =  \bigcup_{i=1}^{n} \big\{  \partialprog[v \annot \sketch_i]   \big\} 
\end{array}
}{
v : \hole_{1}\{ \overline{\sketch}\} \vdash \partialprog \leadsto \exps 
} \ \ {\rm (1)}
\end{array} 
\\ \\
\begin{array}{cc}
  \irule{
\begin{array}{c}
  \exps_1 = \bigcup_{i=1}^{|\overline{\sketch}|} \big\{  \partialprog[v \annot \sketch_i]  \big\} \ \ \ \ \    \nlabel = \hole_{d-1}\{ \overline{\sketch} \} \ \ \ \ \ \nlabel' = \hole_{d-1} \mathcal{C} \cup \{ \overline{\sketch} \}  
\\ 
\exps_2 = \bigcup_{j=1}^{|\overline{v}|} \big\{ \partialprog[v \annot \texttt{f},   v_j \annot \nlabel, \forall_{i \neq j } v_i \annot \nlabel']  \ | \ \texttt{f} \in \mathcal{F}_{|\overline{v}|}, \overline{v}  \text{ fresh} \big\}
\\ 
\exps_3 = \big\{ \partialprog[v \annot \texttt{g},  v_0 \annot \nlabel, \forall_{i \in [1, |\overline{v}|]} v_i \annot \kappa_i]  \ | \ \texttt{g} \in \mathcal{G}_{|\overline{v}|}, \overline{v}, \overline{\kappa} \text{ fresh}  \big\}
\end{array}
}{
v : \hole_{d\geq 1}\{ \overline{\sketch} \} \vdash \partialprog \leadsto \exps_1 \cup \exps_2 \cup \exps_3 
} \ \ {\rm (2)}
\end{array}
\\ \\ 
\begin{array}{cc}
\begin{array}{cc}
\irule{
\begin{array}{c}
\overline{v} \text{ fresh} \quad \quad  n = |\overline{\sketch}| \quad \quad
\exps = \big\{\partialprog[v \annot \texttt{f}, \forall_{i \in [1, n]}. v_i \annot \sketch_i] \big\}
\end{array}
}{
v : \texttt{f} (\overline{\sketch}) \vdash \partialprog \leadsto \exps
} \ \ {\rm (3)}
\end{array}
\\ \\
\begin{array}{cc}
\irule{
\begin{array}{c}
\overline{v} \text{ fresh}\ \quad \quad
\exps = \big\{ \partialprog[v \annot \texttt{g}, v_0 \annot \sketch, \forall_{i \in [1, |\overline{\kappa}|]}. v_i \annot \kappa_i]   \big\} 
\end{array}
}{
v : \texttt{g} (\sketch, \overline{\kappa})  \vdash \partialprog \leadsto  \exps 
} \ \ {\rm (4)}
\end{array}
\end{array}
\end{array}
\]
\caption{Inference rules for \textsc{Expand}. In rule (2), $\mathcal{C}$ denotes all character classes in the DSL, $\mathcal{G}_i$  (resp. $\mathcal{F}_i$)  denotes \texttt{Repeat} (resp. non-\texttt{Repeat}) constructs  with arity $i$. }
\figlabel{expandalg}
\end{figure}

Next, given each expansion $\partialprog'$ of $\partialprog$, we check whether $\partialprog'$ is consistent with the provided examples via the call at line 13 to the {\sc Infeasible} function (discussed in  detail in Section~\ref{sec:over-under}). Observe that the worklist only contains  partial regexes that are consistent with the examples according to the abstract semantics given in Section~\ref{sec:over-under}.


\subsection{Pruning infeasible partial regexes}\label{sec:over-under}

The high-level idea for pruning infeasible partial regexes is quite simple and leverages the same observation made by~\citet{lee}:   Given a partial regex $\partialprog$, we can generate two concrete regexes, $\overapprox$ and $\underapprox$, that over- and under-approximate $\partialprog$ respectively. Specifically, $\overapprox$ and $\underapprox$ have the following properties:
\[
  \small
\begin{array}{ll}
(1) & \forall s. \ \  (\exists r \in \semantics{\partialprog}.\ \textsf{Match}(r, s)) \Rightarrow \textsf{Match}(\overapprox, s) \\
(2) & \forall s. \ \  \textsf{Match}(\underapprox, s) \Rightarrow (\forall r \in \semantics{\partialprog}.\ \textsf{Match}(r, s)) 
\end{array}
\]
Here, we use the notation $r \in \semantics{\partialprog}$ to denote that $r$ is a valid completion of $\partialprog$. Thus, $\overapprox$ matches every string $s$ that \emph{some} completion of $\partialprog$ can match and $\underapprox$ only matches those strings that \emph{all} completions of $\partialprog$ accept. Then, if there is any $\posexample \in \posexamples$ that $\overapprox$ does not match,  we know that $\partialprog$ cannot satisfy the examples and can be rejected without sacrificing completeness of our synthesis algorithm. Conversely, if there is any $\negexample \in \negexamples$ that $\underapprox$ matches, we know that $\partialprog$ will also match it and can thus be rejected safely.  {The main novelty of our feasibility checking technique compared to~\citet{lee} is to leverage the hints inside the h-sketch to compute more precise over- and under-approximations.}

\begin{figure}
\small
\[
\hspace{-8pt}
\begin{array}{cr}
\begin{array}{c}
\irule{
\begin{array}{c}
Root(\partialprog) = v:\sketch \ \ \ \  \vdash \sketch \twoheadrightarrow \langle \overapprox, \underapprox \rangle
\end{array}
}{
\vdash \partialprog    \leadsto \langle \overapprox, \underapprox \rangle
} \ \ {\rm \text{(1)}} 
\end{array} 
\\ \\ 
\begin{array}{c}
\irule{
\begin{array}{c}
Root(\partialprog) = v:(\texttt{f} \in \mathcal{F}_n \setminus \{\texttt{Not}\})\\ (v, v_i) \in \textsf{Edges}(\partialprog)   \ \ \ \ \ \    \vdash \textsf{Subtree}(\partialprog, v_i)   \leadsto \langle  \overapprox_i, \underapprox_i   \rangle    \\ 
\end{array}
}{
 \vdash \partialprog  \leadsto \big\langle  \texttt{f} (\overline{\overapprox_N}), \texttt{f} (\overline{\underapprox_N}) \big\rangle
} \ \ {\rm {\text{(2)}}}
\end{array}
\\ \\ 
\begin{array}{c}
  \irule{
  \begin{array}{c}
  Root(\partialprog) = v:\texttt{Not}  \\  (v, v_1) \in \textsf{Edges}(\partialprog)   \ \ \ \ \ \    \vdash \textsf{Subtree}(\partialprog, v_1)   \leadsto \langle  \overapprox_1, \underapprox_1   \rangle    \\ 
  \end{array}
  }{
   \vdash \partialprog  \leadsto \big\langle  \texttt{Not} (\underapprox_1), \texttt{Not} (\overapprox_1) \big\rangle
  } \ \ {\rm {\text{(3)}}}
  \end{array}
  \\ \\ 
\begin{array}{c}
\irule{
\begin{array}{c}
Root(\partialprog) = v:(\texttt{g} \in \mathcal{G}_{n})   \quad    (v, v_i:\nlabel_i) \in \textsf{Edges}(\partialprog)   \\    \vdash \textsf{Subtree}(\partialprog, v_1) \leadsto \langle  \overapprox_1, \underapprox_1 \rangle   \ \ \ \ \  \forall i \geq 2. \ \nlabel_i \in  \mathbb{N}
\end{array}
}{
 \vdash \partialprog  \leadsto  \big\langle \texttt{g}(\overapprox_1, \overline{\nlabel} ),  \texttt{g}(\underapprox_1, \overline{\nlabel} )    \big\rangle
} \ \ {\rm {\text{(4)}}}
\end{array}
\\ \\ 
\begin{array}{c}
\irule{
\begin{array}{c} 
Root(\partialprog) = v:(\texttt{g} \in \mathcal{G}_{n})  \quad     (v, v_i:\nlabel_i) \in \textsf{Edges}(\partialprog)  \\      
\vdash \textsf{Subtree}(\partialprog, v_1) \leadsto \langle  \overapprox_1, \underapprox_1 \rangle   \ \ \ \ \     \exists i \geq 2. \ \textsf{SymInt}(\nlabel_i) 
\end{array}
}{
\vdash \partialprog    \leadsto  \big\langle \texttt{RepeatAtLeast}(o_1, 1), \bot       \big\rangle
} \ \ {\rm {\text{(5)}}}
\end{array}
\end{array}
\]
\vspace{-5pt}
\caption{Inference rules for \textsc{Approximate}.  $\mathcal{G}_n$ (resp. $\mathcal{F}_n$)  denotes arity $n$ operators in (resp. not in) the \texttt{Repeat} family. }
\figlabel{approxrules} 
\end{figure}

\figref{approxrules} describes our approximation procedure  using  inference rules of the  shape
$
\vdash \partialprog  \leadsto \langle \overapprox, \underapprox \rangle
$ indicating that $\partialprog$ is over- (resp. under-) approximated by $\overapprox$ (resp. $\underapprox$). These rules make use of an auxiliary judgment $\vdash S \twoheadrightarrow \langle \overapprox, \underapprox \rangle$ (described in \figref{approx-sketch}) that generate over- and under-approximations of hierarchical sketches. In what follows, we explain a subset of these rules.

\paragraph{Approximating holes.} 
The first three rules in ~\figref{approx-sketch} describe how to approximate holes in an h-sketch. We differentiate between two cases: 
If the depth of the hole is exactly $1$, then the hole must be filled with an instantiation of one of the h-sketches $\overline{\sketch}$. Thus, we first recursively compute over- and under-approximations for each $\sketch_i$ as $\langle o_i, u_i \rangle$. 
Then, the over-approximation for the hole is obtained by taking the union over all the $o_i$'s and the under-approximation is obtained by intersecting all the $u_i$'s (rule 3). The intuition for the latter is that the under-approximation must match only strings that \emph{every} instantiation of $\sketch_i$ matches; hence, we use intersection. 
On the other hand, for holes with depth greater than $1$,  we approximate them as $\langle \top, \bot \rangle$ (rule 2). 
In principle, we could perform a more precise approximation by instantiating the hole with \emph{every} possible DSL operator and taking the union/intersection of these regexes. However, since the resulting regex would be very large, such an alternative approximation would add a lot of overhead. 
Furthermore, since  holes can be nested inside one another,  we can often  obtain a useful approximation of the top-level sketch even when we use this less precise approximation for nested holes.



\paragraph{Approximating negation.} Rule 3 from \figref{approxrules} and rule 5 from \figref{approx-sketch} both deal with the negation operator. Because the negation of an over-approximation yields an under-approximation and vice versa,  {\tt Not}($\sketch$) is approximated as $\langle {\tt Not}(u), {\tt Not}(o) \rangle $ where $\langle o, u \rangle$ is the approximation for $\sketch$.

\paragraph{Approximating repetition operators.} The last two rules in ~\figref{approxrules} deal with operators in the {\tt Repeat} family, which take a regex as their first argument and integers for the remaining arguments. In rule 4, if all of the integer arguments are constants (rather than symbolic integers), then the over- and under-approximations are computed precisely.  However, if one of the arguments is a symbolic integer (rule 6),  the under-approximation is given by $\bot$, and the over-approximation is {\tt RepeatAtLeast}($o_1, {\tt 1}$) where $o_1$ is the over-approximation of the first argument. (Note that the second argument is $1$ since the integer arguments of all constructs in the {\tt Repeat} family require \emph{positive} integers.)

\begin{figure}[t]
 \small
  \[
  \hspace{-10pt}
  \begin{array}{cr}
  \begin{array}{c}
  \irule{
  \begin{array}{c}
  \vdash \sketch \twoheadrightarrow \langle\overapprox, \underapprox \rangle 
  \end{array}
  }{
  \vdash \hole_1\{\sketch\}  \twoheadrightarrow \langle\overapprox, \underapprox \rangle 
  } \ \ {\rm \text{(1)}}
  \ \ \ \ \ \ \ \ \
  \irule{
  \begin{array}{c}
  d > 1
  \end{array}
  }{
  \vdash \hole_d\{\overline{\sketch}\}  \twoheadrightarrow \langle \top, \bot \rangle 
  } \ \ {\rm \text{(2)}}
  \end{array}
  \\ \\
  \begin{array}{c}
    \irule{
      \begin{array}{c}
       \vdash \sketch_1 \twoheadrightarrow \langle \overapprox, \underapprox \rangle 
      \ \ \ \ \  \vdash \hole_1 \{ \sketch_2, \ldots, \sketch_{|\overline{\sketch}|} \} \twoheadrightarrow \langle o', u' \rangle
      \end{array}
      }{
      \vdash \hole_1\{\overline{\sketch}\}  \twoheadrightarrow \langle \texttt{Or}(\overapprox, \overapprox'), \texttt{And}(\underapprox, \underapprox') \rangle 
      } \ \ {\rm \text{(3)}}
    \end{array}
  \\ \\ 
  
  \begin{array}{c}
  \irule{
  \begin{array}{c}
  \texttt{f} \in \mathcal{F}_n \setminus \{\texttt{Not}\} \ \ \ \ \ \vdash \sketch_i \twoheadrightarrow \langle \overapprox_i, \underapprox_i \rangle
  \end{array}
  }{
  \vdash \texttt{f}(\overline{\sketch}) \twoheadrightarrow \langle \texttt{f}(\overline{\overapprox}), \texttt{f}(\overline{\underapprox}) \rangle 
  } \ \ {\rm \text{(4)}} \\ \\
  
  \irule{
  \begin{array}{c}
   \vdash \sketch \twoheadrightarrow \langle \overapprox, \underapprox \rangle
  \end{array}
  }{
  \vdash \texttt{Not}(\sketch) \twoheadrightarrow \langle  \texttt{Not}(\underapprox), \texttt{Not}(\overapprox) \rangle 
  } \ \ {\rm \text{(5)}}
  
    \ \ \ \ \ \
  \irule{
  \begin{array}{c}
  
  \end{array}
  }{
  \vdash r  \twoheadrightarrow \langle r, r \rangle 
  } \ \ {\rm \text{(6)}}
  
  \\ \\
  
  \irule{
  \begin{array}{c}
    \texttt{g} \in \mathcal{G}_{n}  \ \ \ \ \  \vdash \sketch \twoheadrightarrow \langle \overapprox, \underapprox \rangle
  \end{array}
  }{
  \vdash \texttt{g}(\sketch,\overline{\kappa})  \twoheadrightarrow \langle  \texttt{RepeatAtLeast}(\overapprox,1), \bot \rangle 
  } \ \ {\rm \text{(7)}}

  \end{array}
  \end{array}
    \]
  \vspace{-5pt}
  \caption{Inference rules for over- and under-approximating h-sketches. $\prog$ denotes a concrete regex.}.
  \figlabel{approx-sketch} 
  \end{figure}

\begin{example}
Consider the partial regex from Figure~\ref{fig:partial-expanded}. Its over-approximation is {\tt Concat(<num>, KleeneStar<any>)} and its under-approximation is shown in Eq.~\ref{eq:underapprox}.
\end{example}

\begin{theorem}{\bf (Correctness of \textnormal{\textsc{Approximate}} in \figref{approxrules})}
  Given a partial regex $\partialprog$, suppose \textnormal{\textsc{Approximate}}($\partialprog$) yields  $\langle  \overapprox, \underapprox  \rangle$. Then, we have:
  \begin{center}
  $\textnormal{(1)} \  \forall s. \   (\exists r \in \semantics{\partialprog}.\ \textsf{Match}(r, s)) \Rightarrow \textsf{Match}(\overapprox, s)$ \\
  $\textnormal{(2)} \  \forall s. \   \textsf{Match}(\underapprox, s) \Rightarrow (\forall r \in \semantics{\partialprog}.\ \textsf{Match}(r, s))$ 
  \end{center}
  \theoremlabel{approx1} 
\end{theorem}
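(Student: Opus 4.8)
The plan is to prove both statements simultaneously by structural induction on the derivation of the approximation judgment. Because the rules of \figref{approxrules} for partial regexes invoke the auxiliary sketch judgment of \figref{approx-sketch} only at open (sketch-labeled) leaves, while the sketch rules never refer back to partial regexes, the two judgments form a \emph{layered} rather than genuinely mutual recursion. I would therefore first establish an auxiliary lemma: whenever $\vdash \sketch \twoheadrightarrow \langle \overapprox, \underapprox\rangle$, the pair $\langle \overapprox, \underapprox\rangle$ over- and under-approximates $\semantics{\sketch}$ in exactly the sense of (1) and (2), proved by induction on the structure of $\sketch$. The main theorem then follows by induction on the structure of $\partialprog$, using this lemma to discharge rule (1) of \figref{approxrules} at sketch-labeled roots, and relying throughout on the compositional characterization of completions: $\semantics{\partialprog} = \{\texttt{f}(\overline{\prog}) \mid \prog_i \in \semantics{\partialprog_i}\}$ when $\textsf{Root}(\partialprog)$ is an operator $\texttt{f}$ with subtrees $\partialprog_i$, and $\semantics{\partialprog} = \semantics{\sketch}$ when the root is an open node labeled $\sketch$. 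I would state this decomposition explicitly up front so that each rule's premises line up with the corresponding sub-claim of the induction hypothesis.

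The engine of every inductive step is a \emph{monotonicity lemma} for the DSL: every operator outside the \texttt{Repeat} family other than \texttt{Not} is monotone with respect to language inclusion of its arguments, each \texttt{Repeat}-family operator is monotone in its regex argument for fixed integer bounds, and \texttt{Not} is \emph{anti}-monotone. Granting this, rules (2) and (4) of \figref{approxrules} (and rule (4) of \figref{approx-sketch}) are immediate: if each $\overapprox_i$ matches a superset, and each $\underapprox_i$ a subset, of the strings matched by the corresponding completion, then applying the monotone $\texttt{f}$ preserves both inclusions, yielding the over- and under-approximation properties. The \texttt{Not} rules (rule (3) of \figref{approxrules}, rule (5) of \figref{approx-sketch}) are where the over/under roles swap: anti-monotonicity turns an over-approximation $\overapprox_1$ of the argument into an \emph{under}-approximation $\texttt{Not}(\overapprox_1)$ of the whole, and conversely, which is precisely the swap the rules perform.

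The remaining cases are either trivial or handled by the same monotone reasoning. A concrete regex $\prog$ satisfies $\semantics{\prog} = \{\prog\}$, so $\langle \prog, \prog\rangle$ is exact (rule (6) of \figref{approx-sketch}). A hole of depth greater than $1$ is approximated by $\langle \top, \bot\rangle$, which is sound because $\top$ matches every string and $\bot$ none, so (1) and (2) hold vacuously (rule (2) of \figref{approx-sketch}). For a depth-$1$ hole with several alternatives (rule (3) of \figref{approx-sketch}), I use that $\semantics{\hole_1\{\overline{\sketch}\}}$ is the union of the alternatives together with \texttt{Or}/\texttt{And} monotonicity: a string matched by some completion lies in some branch and is caught by the \texttt{Or} of the branch over-approximations, while a string matched by the \texttt{And} of the branch under-approximations is matched by \emph{every} completion of \emph{every} branch. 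For the symbolic-integer \texttt{Repeat} cases (rule (5) of \figref{approxrules}, rule (7) of \figref{approx-sketch}), the under-approximation $\bot$ is vacuously sound, and the over-approximation $\texttt{RepeatAtLeast}(\overapprox_1,1)$ is sound because every completion instantiates the symbolic integers with \emph{positive} values, so any such completion matches a concatenation of at least one copy of a string matched by its first argument---hence of at least one copy matched by $\overapprox_1$---which is exactly what $\texttt{RepeatAtLeast}(\overapprox_1,1)$ accepts.

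I expect the main obstacle to be the monotonicity lemma itself, which must be checked operator by operator against the DSL semantics (containment, concatenation, repetition) and, crucially, must get the polarity right for \texttt{Not}: a single sign error in the over/under swap would invalidate every rule built on top of it. The second delicate point is arguing the symbolic-integer over-approximation \emph{uniformly} over all admissible instantiations---the bound has to hold for the smallest positive exponent, which is why $\texttt{RepeatAtLeast}(\cdot,1)$, rather than any larger lower bound, is the correct choice. Everything else reduces to bookkeeping about the inductive decomposition of $\semantics{\partialprog}$ and careful alignment of each rule's premises with the induction hypothesis.
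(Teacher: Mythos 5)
Your proposal is correct and follows essentially the same route as the paper: a layered argument that first proves an auxiliary lemma for the sketch judgment of \figref{approx-sketch} by structural induction on $\sketch$, then proves the main theorem by structural induction on $\partialprog$, with the same treatment of the \texttt{Not} polarity swap, the $\langle\top,\bot\rangle$ approximation for deep holes, the \texttt{Or}/\texttt{And} combination for depth-$1$ holes, and the $\texttt{RepeatAtLeast}(\overapprox_1,1)$ over-approximation for symbolic integers. The only difference is presentational: you package the per-operator reasoning into an explicit monotonicity/anti-monotonicity lemma, whereas the paper proves representative cases (\texttt{StartsWith}, \texttt{Not}) and declares the rest similar.
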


\subsection{Solving Symbolic Regexes with SMT}\label{sec:inferConstants}

Recall that our method uses symbolic  regexes to avoid  explicit enumeration of integer constants that appear inside {\tt Repeat} constructs. In this section, we explain how to ``solve'' for these symbolic integers using SMT-based reasoning.

\figref{symint} shows the {\sc InferConstants} procedure for obtaining a set of concrete regexes from a given symbolic regex $\partialprog$. The high-level idea underlying this algorithm is as follows: We first infer a constraint $\phi$ on the values of symbolic integers $\kappa_1, \ldots, \kappa_n$ using the \emph{length} of the strings that appear in the examples. However, this constraint is over-approximate 
in the sense that every concrete regex must satisfy $\phi$ but not every model of $\phi$ corresponds to a concrete regex that satisfies the examples. Thus, given a candidate assignment to one of the $\kappa$'s (obtained from a model of $\phi$), we use the {\sc Infeasible} procedure discussed in the previous section to check whether this (partial) assignment is feasible. If so, we then continue and repeat the same process for the remaining $\kappa_i$'s until we have found a full assignment for all symbolic integers that appear in $\partialprog$. 

\begin{figure}\small
\begin{algorithm}[H]
\begin{algorithmic}[1]
\Procedure{InferConstants}{$\partialprog_0, \posexamples, \negexamples$}
\Statex \Input{a symbolic regex $\partialprog_0$, examples $\posexamples$, $\negexamples$.}
\Statex \Output{a set of concrete regular expressions $\exps$.}
\vspace{0.05in}
\State $(\phi_0, x_0) \assign \textsf{Encode}(\partialprog_0)$; \ \ $\psi_0 \assign \big( \bigwedge_{s \in \posexamples} \phi_0[\textsf{len}(s)/x_0]   \big)  $;
\State $\worklist \assign \{ ( \partialprog_0, \psi_0 ) \}$;  \ \ \ \ \   $\exps \assign \emptyset$; 
\While{$\worklist \neq \emptyset$} 
\State $( \partialprog, \phi ) \assign \worklist.\textsf{remove}()$;  
\If{$\textsf{UNSAT}(\phi)$} 
\textbf{continue}; 
\EndIf 
\State $\sigma \assign \textsf{Model}(\phi)$; $\kappa \assign \textsf{ChooseSymInt}(\partialprog)$;
\State $\partialprog' \assign \partialprog \big[ \kappa \annot \sigma[\kappa] \big]$;
\State $\worklist \assign \worklist \cup \{ ( \partialprog,   \phi \wedge \kappa  \neq \sigma [\kappa] )  \}$;
\If{$\textsf{IsConcrete}(\partialprog')$} 
 $\exps \assign \exps \cup \{  \partialprog'  \}$; 
\Else 
\If{$\neg \textsc{Infeasible}(\partialprog, \posexamples, \negexamples)$} 
\State $\worklist \assign \worklist \cup \{  (\partialprog'  ,   \phi \big[  \kappa \annot \sigma[\kappa] \big] )   \}$;
\EndIf 
\EndIf
\EndWhile 
\State \Return $\exps$;
\EndProcedure
\end{algorithmic}
\end{algorithm}
\vspace{-10pt}
\caption{Algorithm for \textsc{InferConstants}.}
\figlabel{symint} 
\vspace{-5pt}
\end{figure}

\paragraph{SMT Encoding}
Before explaining the {\sc InferConstants} algorithm in more detail, we first explain how to generate a constraint for a given symbolic regex.
Our encoding is described  in \figref{constructrules} using inference rules $\partialprog \hookrightarrow (\phi, x)$. The meaning of this judgment is that, for any instantiation of $\partialprog$ to match a string $s$, the symbolic integers occurring in $\partialprog$ must satisfy $\phi[\emph{len}(s)/x]$. As is evident from the first rule in \figref{constructrules}, our encoding makes use of a function $\Phi$, shown also in  \figref{constructrules}, that generates a constraint for a given regex from constraints on its sub-regexes. Specifically, it takes as input a DSL construct {\tt op}, a variable $x$ that refers to the length of the string matched by the top-level regex, and constraints $\phi_1, \ldots, \phi_k$ for the sub-regexes (where the length of the string matched by $i$'th sub-regex is $x_i$). 

For instance, consider the encoding for the {\tt StartsWith}($r$) construct: If the length of the string matched by $r$ is $x_1$ (which is constrained according to $\phi_1$), then any string matched by {\tt StartsWith(r)} will be at least as long as $x_1$. Thus, we have:
\[
\Phi({\tt StartsWith}, x, x_1, \phi_1) = \exists x_1. (x \geq x_1 \land \phi_1)
\]
Observe that $x_1$ is existentially quantified in the formula because it is a ``temporary'' variable that refers to the length of the string matched by the sub-regex. Since the other cases in the definition of the $\Phi$ function are similar and follow the semantics of the DSL operators, we do not discuss them in detail but just highlight two cases for {\tt Not} and {\tt RepeatAtLeast}. 

The encoding for the {\tt Not} operator is \emph{true} regardless of the sub-regex because inferring anything more precise would require us to track \emph{sufficient} (rather than necessary) conditions for accepting a string, which is not feasible to do using the \emph{length} of the string alone.

The encoding for the {\tt Repeat} family of constructs introduces non-linear multiplication. 
For instance, consider  the symbolic regex {\tt RepeatAtLeast}($r,\kappa$) where the constraint on the sub-regex $r$ is $(\phi_1,x_1)$. Since $r$ is repeated at least $\kappa$ times, the length of the string matched by this regex is at least  $x_1 \cdot \kappa$, which introduces non-linear constraints. 
Thus, while the formulas generated by the {\sc Encode} procedure are technically in Peano (rather than Presburger) arithmetic, we found that the Z3 SMT solver  can efficiently handle the type of non-linear constraints we generate.

\begin{example}
  Consider the following symbolic regex:
  \small
  \begin{align}\label{eq:ex-sym}
    &{\tt Concat} \Big( {\tt Repeat} \big( {\tt Or}({\tt <.>},{\tt <num>}),\kappa_1 \big), \\[-0.8em]
    &{\tt RepeatAtLeast} \big( {\tt RepeatRange}({\tt <num>},1,3),\kappa_2 \big) \Big)\nonumber
  \end{align}

  \normalsize
  Using the rules presented in \figref{constructrules}, we  generate the following constraint $\phi$:
  \small
  \begin{align}
    \tag{\texttt{Concat}} \phi   &= \exists x_1, x_2. \ (x_0 = x_1 + x_2) \wedge \phi_1 \wedge \phi_2 \\ 
    \tag{\texttt{Repeat}} \phi_1 &= \exists x_3,x_3'. \ (x_1 \geq x_3 * \kappa_1 \wedge x_1 \leq x_3' * \kappa_1) \\ & \hspace{35pt} \wedge \phi_3 \wedge \phi_3[x_3'/x_3]\wedge (1 \leq \kappa_1 \leq \textsf{MAX}) \nonumber \\
    \tag{\texttt{Or}} \phi_3 &= (x_3 = 1 \vee x_3 = 1) \\
    \tag{\texttt{AtLeast}} \phi_2 &= \exists x_4. \ (x_2 \geq x_4 * \kappa_2) \wedge \phi_4 \wedge (1 \leq \kappa_2 \leq \textsf{MAX}) \\
    \tag{\texttt{Range}} \phi_4 &= 1 \leq x_4 \leq 3
  \end{align}
  \normalsize
 Note that the top-level constraint $\phi$ can be simplified to the following formula   by performing quantifier elimination: 
 \small
 \begin{equation}\label{eq:constraint-simp}
 (x_0 \geq \kappa_1 + \kappa_2) \wedge (1 \leq \kappa_1 \leq \textsf{MAX}) \wedge (1 \leq \kappa_2 \leq \textsf{MAX})
 \end{equation}
 \normalsize
\end{example}

\paragraph{Using SMT encoding for inference}

Now that we have a way to encode symbolic regexes using  SMT, we can describe the {\sc InferConstants} algorithm from  \figref{symint} in more detail. Given a symbolic regex $\partialprog_0$, the algorithm first generates the SMT encoding $\phi_0$ for $\partialprog_0$ using the {\sc Encode} function  (i.e., \figref{constructrules}). Here, $\phi_0$ contains free variables $\kappa_1, \ldots, \kappa_n$ as well as a variable $x_0$ that refers to the length of the input string. Now, since every $s \in \posexamples$ should match the synthesized regex, we can obtain a constraint on the symbolic integers by instantiating $x_0$ with $\emph{len(s)}$ for every $s \in \posexamples$ and taking their conjunction.  Thus, formula $\psi_0$ from line 2 gives us a constraint on the symbolic integers used in~$\partialprog$.

\begin{figure}
\small 
\[
\hspace{-25pt}
\begin{array}{cr}
\begin{array}{c}
\irule{
\begin{array}{c}
\emph{Root}(\partialprog) = v : \texttt{op}     \ \ \ \ \ \ \ 
\textsf{arity}(\texttt{op}) = n   \ \ \ \ \ \ \  
(v, v_i) \in \textsf{Edges}(\partialprog) \\
\textsf{Subtree}(\partialprog,v_i)   \hookrightarrow  (\phi_i, x_i)  \ \ \ \ \ \ \ 
  x \ \emph{fresh}
\end{array}
}{
 \partialprog  \hookrightarrow  (\Phi(\texttt{op}, x, \overline{x}, \overline{\phi}), \ x)
} \ \ {\rm {\text{(1)}}}   
\\ \\ 
\ \ \ \ \ \ 
\irule{
\begin{array}{c}
x \ \emph{fresh} \ \ \ 
\emph{Root}(\partialprog) = v : ( c \in \mathcal{C} )  \\
\end{array}
}{
 \partialprog  \hookrightarrow  (x = 1, \ x)
} \ \ {\rm {\text{(2)}}}  
\\ \\ 
\irule{
\begin{array}{c}
\emph{Root}(\partialprog) = v : \big( \kappa \in \textsf{SymInt}(\partialprog) \big) 
\end{array}
}{
 \partialprog   \hookrightarrow  (1 \leq \kappa \leq \textsf{MAX}, \ \kappa)
} \ \ {\rm {\text{(3)\footnotemark}}}  
\end{array}
\end{array}
\]
\\ 
\[
\hspace{-10pt}
\begin{array}{lll}
\Phi(\texttt{StartsWith},  x, \overline{x}, \overline{\phi}) & = & \exists x_1. \ (x \geq x_1 \wedge \phi_1) \\ 
\Phi(\texttt{EndsWith}, x, \overline{x}, \overline{\phi}) & =  & \exists x_1.  \ (x \geq x_1 \wedge \phi_1) \\ 
\Phi(\texttt{Contains}, x, \overline{x}, \overline{\phi}) & = & \exists x_1. \ (x \geq x_1 \wedge \phi_1) \\  
\Phi(\texttt{Not}, x, \overline{x}, \overline{\phi}) & = & \emph{true} \\ 
\Phi(\texttt{Optional}, x, \overline{x}, \overline{\phi}) & = & \exists x_1. \ (x = 0 \vee x = x_1 ) \wedge \phi_1   \\ 
\Phi(\texttt{KleeneStar}, x, \overline{x}, \overline{\phi}) & = & \exists x_1. \ ( x = 0 \vee  x \geq x_1 ) \wedge \phi_1  \\ 
\Phi(\texttt{Concat}, x, \overline{x}, \overline{\phi}) & = &  \exists x_1, x_2. (x = x_1 + x_2) \\ & & \hspace{30pt} \wedge \phi_1  \wedge \phi_2  \\ 
\Phi(\texttt{Or}, x, \overline{x}, \overline{\phi}) & = &  \exists x_1, x_2.  ( x = x_1 \vee x = x_2) \\ & & \hspace{30pt} \wedge \phi_1 \wedge \phi_2  \\ 
\Phi(\texttt{And}, x, \overline{x}, \overline{\phi}) & = &   \exists x_1, x_2.  (x = x_1 \wedge x = x_2)  \\ & & \hspace{30pt} \wedge \phi_1 \wedge \phi_2  \\ 
\Phi(\texttt{Repeat}, x, \overline{x}, \overline{\phi}) & = &    \exists x_1, x_1'. (x \geq x_1 x_2 \wedge x \leq x_1' x_2)  \\ & & \hspace{30pt} \wedge \phi_1 \wedge \phi_1[x_1'/x_1] \wedge \phi_2  \\ 
\Phi(\texttt{RepeatAtLeast}, x, \overline{x}, \overline{\phi}) & = &    \exists x_1. (x  \geq x_1  x_2) \wedge \phi_1 \wedge \phi_2  \\ 
\Phi(\texttt{RepeatRange}, x, \overline{x}, \overline{\phi}) & = &    \exists x_1, x_1'. (x \geq x_1  x_2  \wedge   x   \leq  x_1'  x_3) \\ & & \hspace{10pt}   \wedge \phi_1 \wedge \phi_1[x_1'/x_1] \wedge \phi_2   \wedge  \phi_3  \\ 

\end{array}
\]

\vspace{-5pt}
\caption{Inference rules for \protect$\textsc{Encode}$.}
\figlabel{constructrules} 
\end{figure}
\footnotetext{ $\textsf{MAX}$ is the maximum integer constant in the DSL. We set $\textsf{MAX}$ to the length of the longest example in the implementation.}



Next, the loop in lines 5--13 populates a set $\Pi$ of concrete regexes  that can be obtained by instantiating the symbolic integers in $\partialprog_0$ with constants. Towards this goal, it maintains a worklist of symbolic regexes that are made increasingly more concrete in each iteration.

Specifically, the worklist contains pairs $(\partialprog, \phi)$ where $\partialprog$ is a symbolic regex and $\phi$ is a constraint on the symbolic integers used in $\partialprog$ --- initially, the worklist just contains $(\partialprog_0, \psi_0)$. Then, in each iteration, we remove from the worklist a symbolic regex $\partialprog$  and its constraint $\phi$ and make an assignment to one of the symbolic integers $\kappa$ used in $\partialprog$. To this end, we first query the SMT solver to get a model $\sigma$ of $\phi$. However, since $\phi$ is over-approximate, instantiating the symbolic integers in $\partialprog$ with $\sigma$ may not yield a concrete regex that satisfies the examples. Thus, we pick one of the symbolic integers $\kappa$ in $\partialprog$ and check whether $\sigma[\kappa]$ is infeasible using the method described in Section~\ref{sec:over-under} (line 12).~\footnote{Alternatively, we could plug in the whole assignment $\sigma$ and 
check whether the resulting regex is consistent with the examples. However, our proposed method is preferable over this alternative because a partial assignment to a subset of the variables often results in an infeasible partial regex and allows us to prune significantly more programs.
} If the resulting symbolic regex cannot be proven infeasible, we then add the partially concretized symbolic program $P' = P [\kappa \annot \sigma[\kappa]] $ to the worklist, together with its corresponding constraint $\phi[\kappa \annot \sigma[\kappa]]$ (line 13
). However, in addition, we also keep the original symbolic regex $\partialprog$ since there may be other valid assignments to $\kappa$ beyond just $\sigma[\kappa]$ (line 9). Finally, to ensure that the solver does not keep yielding the same assignment to $\kappa$, we strengthen its constraint by adding the ``blocking clause'' $\kappa \neq \sigma[\kappa]$ (also line 9). Upon termination, the set $\Pi$ contains every feasible concrete regex that can be obtained by instantiating the original symbolic regex $\partialprog_0$.

\begin{example}
 Consider the simplified constraint $\phi$ from Eq.~\ref{eq:constraint-simp}. After instantiating $x_0$  with the length of each positive example from Section~\ref{sec:overview} and taking their conjunction, we obtain the following formula $\psi_0$:
 \small 
 \begin{align*}
    &(\kappa_1 + \kappa_2  \leq 13) \wedge (\kappa_1 + \kappa_2 \leq 7) \wedge (\kappa_1 + \kappa_2 \leq 18) \wedge (\kappa_1 + \kappa_2 \leq 15) \\& \wedge (1 \leq \kappa_1 \leq \textsf{MAX}) \wedge (1 \leq \kappa_2 \leq \textsf{MAX})
  \end{align*}
\normalsize      
This formula  is equivalent to the following much simpler constraint:
\small
\begin{equation}\label{eq:simp-inst}
\psi_0 = (\kappa_1 + \kappa_2 \leq 7) \wedge (1 \leq \kappa_1 \leq \textsf{MAX}) \wedge (1 \leq \kappa_2 \leq \textsf{MAX} )
\end{equation}
\normalsize
 Now, suppose the solver returns the model $[\kappa_1 \mapsto 1, \kappa_2 \mapsto 1]$ to Eq.~\ref{eq:simp-inst}. Thus, we first assign $1$ to $\kappa_1$ in the partial regex from Eq.~\ref{eq:ex-sym}, which yields: 
 \small
 \begin{align*}
  &{\tt Concat} \Big( {\tt Repeat} \big( {\tt Or}({\tt <num>},{\tt <.>}),1 \big), \\[-0.8em]
  &{\tt RepeatAtLeast}({\tt RepeatRange} \big( {\tt <num>},1,3),\kappa_2 \big) \Big)
 \end{align*}
\normalsize
We can prove that this partial regex is inconsistent with the examples from Section~\ref{sec:overview} because no instantiation of $\kappa_2$ yields a regex that matches the positive example ``123456789.123''. Observe that ignoring the assignment to $\kappa_2$ allows us to prune $6$ regexes at a time instead of just one.
\end{example}

\begin{theorem}{\bf (Correctness of \textnormal{\textsc{InferConstants}} in \figref{symint})}
  Given a partial regex $\partialprog$, positive examples $\posexamples$ and negative examples $\negexamples$, suppose that \textnormal{\textsc{InferConstants}} returns $\exps$. 
  Then, for any concrete regex $\prog \in \semantics{\partialprog}$ that is consistent with $\posexamples$ and $\negexamples$, we have $\prog \in \exps$. 
  \end{theorem}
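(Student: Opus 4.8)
The plan is to prove this completeness-style guarantee --- that \textsc{InferConstants} never discards a consistent concrete regex --- via a loop invariant on the worklist, supported by two soundness facts. Fix a consistent concrete regex $\prog \in \semantics{\partialprog}$. Since $\prog$ arises from $\partialprog$ by replacing each symbolic integer $\kappa_i$ with a positive constant $c_i$, let $A = [\kappa_1 \mapsto c_1, \dots, \kappa_n \mapsto c_n]$ be the corresponding assignment; the goal is to show the loop eventually fully concretizes $\partialprog$ according to $A$ and places the result in $\exps$.

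First I would establish soundness of the SMT encoding of \figref{constructrules}: whenever $\partialprog \hookrightarrow (\phi, x)$ and the instantiation of $\partialprog$ under an assignment $B$ matches a string $s$, then $B \models \phi[\textsf{len}(s)/x]$. This is proven by structural induction on $\partialprog$, checking each clause of $\Phi$ against the DSL semantics (e.g., for \texttt{Concat} a matched string splits as $s_1 s_2$ with $\textsf{len}(s) = \textsf{len}(s_1) + \textsf{len}(s_2)$, matching the disjunct $x = x_1 + x_2$; the \texttt{Not} case is immediate since $\Phi$ returns \emph{true}; the \texttt{Repeat} cases follow because a $k$-fold concatenation has length bounded below by $x_1 \kappa$). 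Applying this lemma to each positive example and conjoining shows that the seed assignment satisfies the initial constraint, i.e. $A \models \psi_0$. Note this lemma yields only a \emph{necessary} condition (the constraint over-approximates), so it can justify that $A$ survives pruning but never that an output is consistent --- the actual consistency of members of $\exps$ is deferred to the concreteness check, by design.

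The core of the argument is the invariant, maintained at the top of each iteration: \emph{either} $\prog \in \exps$ already, \emph{or} the worklist contains a pair $(\partialprog', \phi)$ such that $\partialprog'$ is obtained from $\partialprog$ by concretizing some subset of its symbolic integers in agreement with $A$, and $A \models \phi$. The base case holds since the worklist starts as $\{(\partialprog, \psi_0)\}$ and $A \models \psi_0$. For the inductive step, I track the witness pair; dequeuing a non-witness leaves it untouched, so assume the witness $(\partialprog', \phi)$ is processed. Because $A \models \phi$, the constraint is satisfiable, so the iteration does not \texttt{continue}; let $\sigma$ be the model returned and $\kappa$ the integer chosen by \textsf{ChooseSymInt}. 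If $\sigma[\kappa] = A[\kappa]$, then $\partialprog'' = \partialprog'[\kappa \annot \sigma[\kappa]]$ still agrees with $A$: if it is concrete it equals $\prog$ and is added to $\exps$, and otherwise $\prog \in \semantics{\partialprog''} \subseteq \semantics{\partialprog'}$ is a consistent completion, so by \theoremref{approx1} (soundness of the approximations underlying \textsc{Infeasible}) the feasibility check --- on either $\partialprog'$ or $\partialprog''$ --- fails, and $(\partialprog'', \phi[\kappa \annot \sigma[\kappa]])$ is enqueued with $A \models \phi[\kappa \annot \sigma[\kappa]]$ by the substitution lemma, re-establishing the invariant with a larger concretized set. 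If instead $\sigma[\kappa] \neq A[\kappa]$, the algorithm re-enqueues $(\partialprog', \phi \wedge \kappa \neq \sigma[\kappa])$ at line 9; since $A[\kappa] \neq \sigma[\kappa]$ we still have $A \models \phi \wedge \kappa \neq \sigma[\kappa]$, so the invariant survives with the same concretized set. Finally, as the theorem assumes \textsc{InferConstants} returns $\exps$, the loop terminates with an empty worklist, and the contrapositive of the invariant forces $\prog \in \exps$.

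I expect the main obstacle to be the inductive step of the invariant, specifically arguing that the witness is never silently dropped. The mismatch case hinges on the blocking clause $\kappa \neq \sigma[\kappa]$ being added \emph{unconditionally} (line 9), not only in the feasible branch, so that the witness persists under a strictly strengthened but still $A$-satisfiable constraint; and the match case must invoke \theoremref{approx1} to guarantee that \textsc{Infeasible} does not prune the witness before it is fully concretized. A secondary point worth care is that the domain of each $\kappa$ is bounded by $\textsf{MAX}$, so the mismatch case can recur only finitely often before forcing $\sigma[\kappa] = A[\kappa]$ --- this is what guarantees the witness chain makes progress toward full concretization and is also what underlies termination, though termination itself is granted by the theorem's hypothesis.
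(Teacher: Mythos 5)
Your proof is correct and follows essentially the same route as the paper's: both rest on the soundness of the \textsc{Encode} constraint (so the consistent assignment satisfies $\psi_0$), a worklist invariant stating that the target regex is either already in $\exps$ or still represented by some pending pair, preservation of that invariant via the unconditional blocking clause and the soundness of \textsc{Infeasible} (\theoremref{approx1}), and emptiness of the worklist at termination. The only difference is presentational --- you track a single fixed witness assignment $A$ where the paper reasons about the sets $\semantics{\partialprog}_\phi$ --- and the paper additionally proves termination outright rather than reading it off the hypothesis, which you also sketch.
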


\begin{figure*}
    \begin{center}
    \includegraphics[trim=8 50 8 20,clip,scale=0.5, width=\textwidth]{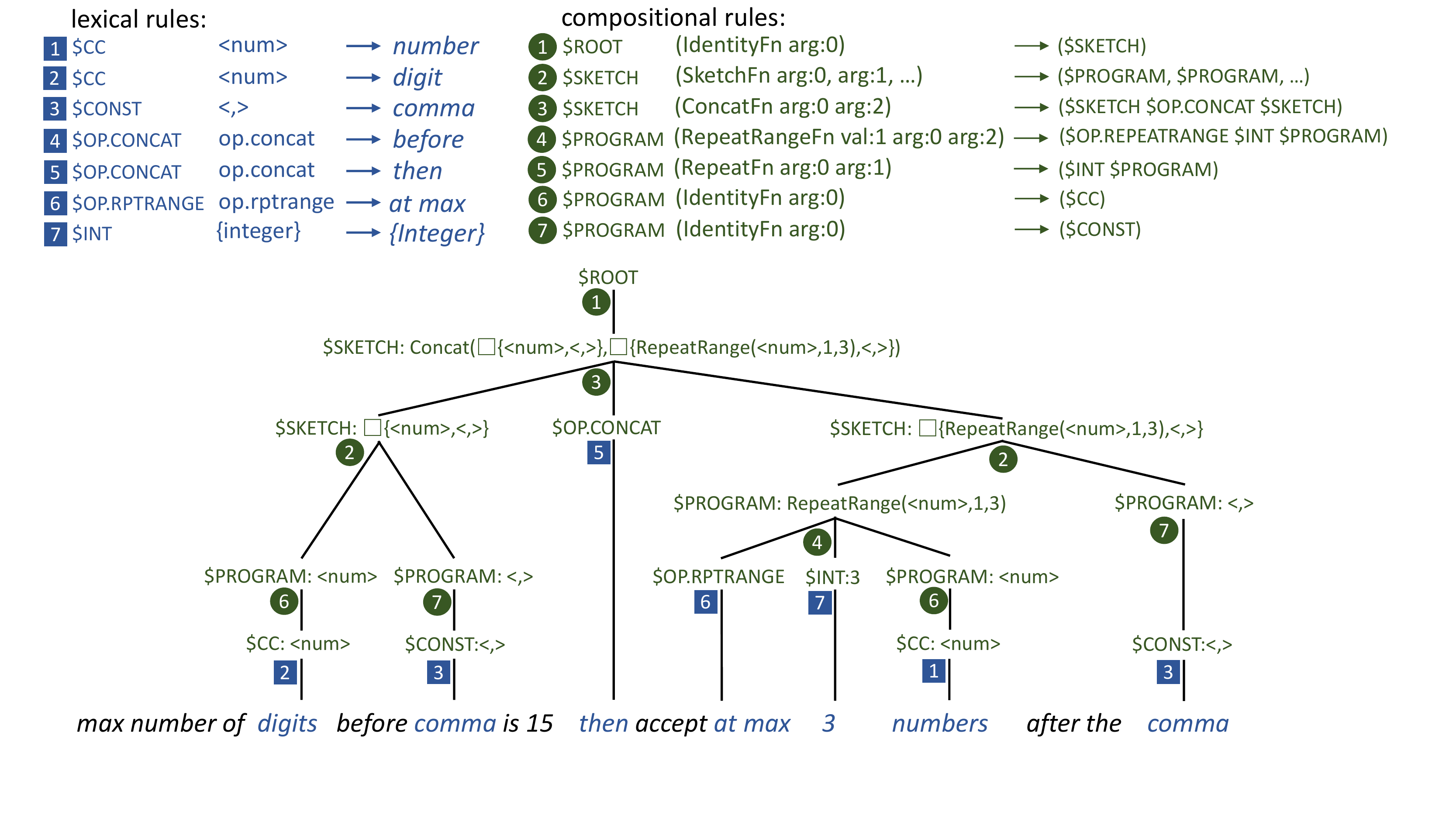}
    \end{center}
    \caption{Examples of rules and the parse tree for one possible derivation generated from the given description. 
    }
    \figlabel{semprerules}
\end{figure*}

\section{From English Text to H-Sketches}
\seclabel{sketchgen} 
In this section, we describe a technique for generating hierarchical sketches from English text. While there are many NLP techniques  that can be used to solve this problem (including  currently-popular \emph{seq2seq} models), we frame it as an instance of semantic parsing and build our sketch generator on top of the  SEMPRE framework~\cite{berant}.  As mentioned briefly in Section ~\ref{sec:intro}, we choose  semantic parsing over  deep learning techniques because it does not require as much labeled training data. However, our general synthesis methodology and the PBE algorithm are both agnostic to the NLP technique used for parsing English text into an h-sketch.

\subsection{Background on semantic parsing} 
Semantic parsing  is used for converting natural language to a formal representation, such as SQL~\cite{mooney,sqlizer}, lambda calculus~\cite{carpenter}, or natural logic~\cite{manning}. This formal representation is often referred to as a \emph{logical form}, and semantic parsers use a context-free grammar (CFG) to translate  natural language  to the target logical form. However, since natural language is highly complex and often very ambiguous, there are many possible logical forms that can be obtained from a given natural language description. Thus, modern semantic parsers also incorporate a machine learning model to score different parses for a given utterance. However, as mentioned earlier, these techniques still do not require as much labeled training data as other methods based on deep learning. 

In the context of this work, logical forms correspond to hierarchical sketches, so our CFG needs to parse a given English utterance into an h-sketch. In the remainder of this section, we first give an overview of \toolname's CFG (Section~\ref{sec:semparser-grammar}) and then  discuss how to produce a \emph{ranked} list of h-sketches using a machine learning model (Section~\ref{sec:semparser-learning}).

\subsection{Grammar-based sketch composition}\label{sec:semparser-grammar}


Following standard convention,  we specify our grammar rules in the following format:
\par
\emph{
        <target category> <target derivation> $\rightarrow$ <source sequence>
}\\
Such a rule maps  \emph{<source sequence>} to a \emph{<target derivation>} with category \emph{<target category>}. 
Rules of the semantic parser can be further categorized into two groups, namely lexical rules and compositional rules.
Examples of both types of rules are provided in \figref{semprerules}.
A lexical rule maps a word in the sentence to base concepts in the DSL, including character class (e.g., lexical rule 1) and operator (e.g., lexical rule 4).
A compositional rule combines one or more base components and builds larger h-sketches. For instance, as shown in \figref{semprerules}, compositional rule 2 is applied to generate a sketch $\hole{\{\texttt{<num>,<,>}\}}$, labeled with category \textit{\$SKETCH}, from a sequence of two derivations, $\texttt{<num>}$ and $\texttt{<,>}$, both labeled with \textit{\$PROGRAM}, via the semantic function \emph{SketchFn}. Here, we use category \textit{\$SKETCH} to denote sketches containing holes and category \textit{\$PROGRAM} to mark concrete regexes. 


Given  a set of pre-defined grammar rules and a natural language description $\mathcal{L}$, the semantic parser generates a  list of possible derivations for $\mathcal{L}$. Each derivation  can be mapped to an h-sketch deterministically, and, in general, multiple derivations of the same sentence can map to the same h-sketch.
We construct the derivations for a given sentence  recursively  in a bottom-up fashion using dynamic programming.
More specifically, 
we first apply lexical rules to generate derivations for any span (i.e., sequence of words) that they match. Then, the derivations of larger spans are constructed by applying compositional rules to derivations built over non-overlapping constituent spans. As the final output, we take derivations spanning the whole sentence that are labeled with a designated \textit{\$ROOT} category.

\begin{example}
To build intuition, \figref{semprerules} demonstrates the parsing process for the English phrase
\emph{``the max number of digits before comma is 15 then accept at max 3 numbers''}.
{Note that our parser allows skipping arbitrary words; thus, not every span in the description is used for building this derivation. Finally, we do not require applying every rule from ~\figref{semprerules} when constructing this derivation, such as lexical  rule 4 and compositional rule 5. }
Also observe that our grammar does not uniquely define an h-sketch for a given  sentence. In particular, we can also obtain 
the following alternative h-sketch from the same text:
\begin{equation}\label{eq:badsketch}
    \texttt{Concat} \big( \hole{\{\texttt{<num>}\}},\hole\{\texttt{<,>},\texttt{Repeat}(\texttt{<num>,3})\} \big)
    \end{equation}
\end{example}

\subsection{Learning feature weights}\label{sec:semparser-learning}

Since there are many different h-sketches for an given English sentence, we need a way of scoring derivations so that  h-sketches that are more consistent with the utterance are assigned a higher score. Towards this goal, our parser leverages a discriminative log-linear model using a set of \emph{features} extracted from natural language.
Specifically, given a derivation $d$ from the set of possible derivations $D(\mathcal{L})$ for a description $\mathcal{L}$, we extract a feature vector $\phi(\mathcal{L},d)\in \mathbb{R}^b$. 
The  features are local to individual rules and are chosen to capture lexical, compositional, and semantic characteristics of the derivation and its sub-derivations.
\toolname\ leverages two feature sets, namely \emph{rule features} and \emph{span features}, both of which are inherited from the SEMPRE framework.
Concretely, a rule feature indicates whether a particular rule is fired during the derivation, and a span feature tracks the number of consecutive words that are used when generating a particular category in the derivation.



Given these extracted feature vectors, the probability that a derivation $d$ is the intended sketch is given by:
\small
$$P(d|\mathcal{L}) = \frac{\exp (\theta^\top \phi(\mathcal{L},d))}{\sum_{d'\in D(\mathcal{L})}\exp (\theta^\top \phi(\mathcal{L},d'))}$$
\normalsize
where $\theta \in \mathbb{R}^b$ is the vector of parameters to be learned.
We learn these parameters with  supervision from labeled training data, which consists of pairs  $(\mathcal{L}_i, h_i^*)$ where $\mathcal{L}_i $ is the English description and $h^*$ is a corresponding sketch label.
During learning, we maximize the log probability of the system generating $h^*$ regardless of derivation. In particular, given $N$ training samples, our objective function is:
\small
$$\max_\theta \log \sum_{i}^N \sum_{d: \textrm{sketch}(d)=h_i^*} P(d|\mathcal{L}_i)$$
\normalsize
Intuitively, 
the model increases the weight assigned to features for derivations that exactly match the annotated sketch.

In practice, $D(\mathcal{L})$ is a very large set of derivations, exponential with respect to the number of active lexical rules in the span.
Therefore, we use beam search to find the approximate highest-scoring derivation. That is, instead of keeping all  possible derivations for a span, we only keep a set of top-$m$ derivations $D_m(\mathcal{L})$ according to their probabilities and discard the rest. During training, we maximize the likelihood of the correct derivation with respect to this set; that is, normalizing over $D_m(\mathcal{L})$ rather than $D(\mathcal{L})$.

\section{Implementation}\label{sec:impl}


We have implemented our synthesis algorithm in a new tool called \toolname. In addition to the natural language description and positive/negative examples, \toolname\ takes two additional inputs, namely a time budget $t$ and a parameter $k$ that controls how many results to show to the user.  The output of \toolname\ consists of \emph{up to} $k$ regexes that satisfy the examples. Note that the actual number of regexes returned by \toolname\ may be less than $k$ due to the time budget.

\toolname\ is written in Java and  leverages a number of other existing  tools. 
First, our semantic parser is built on top of the SEMPRE  framework~\cite{berant} and leverages its existing functionalities, such as the  linguistic pre-processor. 
Second, \toolname\ makes use of the Z3 SMT solver~\cite{z3} for inferring possible values of the symbolic integers  (recall Section~\ref{sec:inferConstants}). 
Finally, \toolname\ uses the Brics automaton library~\cite{automaton} for checking whether a string is matched by a regex. 

The internal workflow of \toolname\ is as follows: First, the semantic parser generates up to $500$ derivations for the given utterance and ranks them using the machine learning model. Then, we take the top $25$ sketches produced by the parser and run $25$ instances of the PBE engine \emph{in parallel} to find a completion of each sketch that is consistent with the given examples. Then, given a value of $k$ that can be specified by users, we wait for up to $k$ PBE engine instances to complete their task  and return the synthesized regexes for those tasks that terminate within the given time budget $t$.



\vspace{-5pt}
\paragraph{Eliminating membership queries.} For every concrete regex $r$ explored by our synthesis algorithm, we need to check whether $r$ matches all positive examples and rejects all negative ones. Thus, \toolname\ ends up issuing many regular language membership queries, some of which are quite expensive in practice.
To reduce this overhead, our implementation uses various heuristics to eliminate unnecessary membership queries. For example, if we have determined that the regex {\tt Contains}($r$) does not match one of the positive examples, then we know that {\tt StartsWith}($r$) will also not match at least one of the examples. Similarly, if we have determined that the regex {\tt RepeatAtLeast}($r, 2$) does not match a positive example, we can conclude {\tt RepeatAtLeast}($r, k$) will not match the examples for any value of $k \geq 2$. Our implementation uses such ``subsumption'' heuristics to eliminate some of the redundant membership queries. 

\vspace{-5pt}
\paragraph{Eliminating redundant sketches.}
During semantic parsing, duplicate tokens in a span  lead to many redundant derivations. 
We eliminate these duplicate sketches during beam search and keep the generated derivations non-identical.

\section{Data Sets for Evaluation}\label{sec:setup}

To conduct our experiments, we collected two data sets, one of which is an adapted version of a data set used in \deepregex~\cite{deepregex} and another  much more challenging data set curated from StackOverflow. 

\vspace{-2pt}
\paragraph{DeepRegex data set} 

As mentioned earlier, \deepregex\ is a tool for generating regexes directly from natural language~\cite{deepregex}.  
However, to evaluate our technique on the \deepregex\ data set, we need positive and negative examples in addition to the English description. Thus, to adapt this data set to our setting, we took 200 benchmarks from this data set and asked users to provide positive and negative examples\footnote{{The details of this data set and the procedure for adapting it to our setting are described in the Appendix.}}. On average, each benchmark in this adapted \deepregex\ data set contains 4 positive and 5 negative examples.

\vspace{-2pt}
\paragraph{StackOverflow data set}

To evaluate \toolname\ on more realistic string matching tasks encountered by real-world users, we also collected a set of \emph{much more challenging} benchmarks from StackOverflow.  Specifically, we  searched StackOverflow using relevant keywords, such as \emph{``regex''}, \emph{``regular expression''}, \emph{``text validation''} etc. and retained \emph{all} benchmarks that contain  \emph{both} an English description as well as  positive \emph{and} negative examples.  Using this methodology, we obtained a total of $122$ regex-related tasks \revise{and generated the ground-truth regex by directly converting the answer on StackOverflow to our DSL.}

\vspace{-2pt}
\revise{\paragraph{Training for each data set.}  As described in Section 6.3, our semantic parser is parametrized by a vector $\theta$ that is used for assigning scores to each possible derivation. Because these parameters are learned using supervision from labeled training data, we need training data for each data set in the form of pairs of English sentences and their corresponding h-sketches. However, since the original data sets are not annotated with hierarchical sketches, we had to construct the h-sketches used for training ourselves.}

\revise{In general, the optimal h-sketch to use for training is hard to determine. On the one extreme, we can write an h-sketch that is exactly the target regex, but that would lead to poor performance of the semantic parser on the test set. On the other extreme, we can use a sketch that is completely unconstrained but that would be completely unhelpful for the PBE engine. To achieve a reasonable trade-off between these two extremes, we used the following strategy. For the \deepregex\ dataset where the target regexes are relatively small and simple, we automatically generated the h-sketch by replacing the top-level (root) operator with a hole. For example, if the target regex is ${\tt Concat}(\texttt{<num>}, \texttt{<let>})$, our h-sketch used for training would be $\hole\{\texttt{<num>}, \texttt{<let>}\}$. While this strategy worked well for the \deepregex\ dataset, it was not sufficiently fine-grained for the much more difficult StackOverflow benchmarks. Therefore, we manually constructed the h-sketches for the StackOverflow benchmarks by reading the English description and expressing its high-level structure as an h-sketch.  In many cases, our manually-written h-sketch 
faithfully captures the unambiguous parts of the English description (e.g., letter) but replaces ambiguous (or difficult to parse) fragments with holes. }

\vspace{-2pt}
\paragraph{Settings for each data set.} Recall from Section~\ref{sec:impl} that \toolname\ is parametrized by two additional inputs $t,k$ that control the time budget and number of results to display. For the easier \deepregex\ data set, we set a time-out limit of $10$ seconds and display only a single result. For the much harder StackOverflow benchmarks, we set the time budget to be $60$ seconds and display the top $5$ results. For performing comparisons, we use the same values of $t$ and $k$ across all tools and consider the benchmarks to be successfully solved if the intended regex is within the top $k$ results.

\section{Experimental Results}\label{sec:eval}

In this section, we describe a series of three experiments that are designed to answer the following research questions:

\begin{itemize}[leftmargin=*]
    \item \textbf{Q1:} What is the benefit of multi-modal synthesis? Does our approach work better compared to alternative approaches that use \emph{only} examples or \emph{only} natural language?
    \item \textbf{Q2:} How effective is our proposed PBE technique? In particular, how useful is sketch-guided deduction (Sec.~\ref{sec:over-under}) and  SMT-based solving of symbolic regexes (Sec.~\ref{sec:inferConstants})?
    \item \textbf{Q3:} Is \toolname\ helpful to users in constructing regular expressions for a given task? 
\end{itemize}

All  experiments   are conducted on an Intel Xeon(R) E5-1620 v3 CPU with 32GB physical memory.

\subsection{Benefits of multi-modal synthesis}

To evaluate the benefits of leveraging two different specification modalities, we compare \toolname\ against two baselines. Our first baseline is \deepregex\ which directly translates the natural language description into a regex using a sequence-to-sequence model~\cite{deepregex}.  Our second baseline is a variant of \toolname, henceforth referred to as \toolvar, that only uses positive and negative examples. In particular, \toolvar\ starts with a completely unconstrained sketch (i.e.,  single hole) and searches for a regex that satisfies the examples using the same algorithm described in \secref{synthesis}. {\footnote{\revise{As we show in the next subsection, \toolvar\ outperforms prior state-of-the-art regex PBE techniques; thus, we take  \toolvar\ as the representative state-of-the-art approach for synthesizing regular expressions purely from examples.}}


Since PBE tools are meant to be used interactively, we use the following methodology. First, we  run both \toolname\ and \toolvar\ on the initial examples in the original data set and consider synthesis to be successful if the intended regex is among those  returned by the tool. If it is unsuccessful, in the next iteration, we provide two additional examples \revise{that are guaranteed to rule out the returned incorrect regex}. We continue this process up to a maximum of four iterations.  


\begin{figure}[!t]
\begin{center}
\hspace{-20pt}
\begin{minipage}[t]{0.46\linewidth}
            \begin{center}
    \begin{tikzpicture}[scale=0.55]
        \begin{axis}[
            ymax=240,
            y=0.03cm,
            x=40,
            legend cell align = left,
            legend pos = south west,
            legend style = {
                at={(0.25,0.98)},
                legend columns=1,
                anchor=north
            },
            xlabel style={yshift=1mm},
            ylabel = \# Solved Benchmarks,
            xlabel = \# of Iterations,
            xmax = 4.5
        ]
        \legend{\toolname, \toolvar, \deepregex}
        \addplot[
            smooth,
            line width=0.4mm,
            mark=square*,
            color=teal
        ] coordinates {
            (0, 151)
            (1, 179)
            (2, 184)
            (3, 184)
            (4, 185)

        };
        \addplot[
            smooth,
            line width=0.4mm,
            mark=*,
            color=blue
        ] plot coordinates {
            (0, 45)
            (1, 61)
            (2, 65)
            (3, 66)
            (4, 66)

        };
        \addplot[
            smooth,
            line width=0.5mm,
            mark=triangle*,
            color=violet
        ] plot coordinates {
            (0, 134)
            (1, 134)
            (2, 134)
            (3, 134)
            (4, 134)

        };
        \end{axis}
    \end{tikzpicture}
    
    \captionsetup{font={small}}
    	\caption*{ \hspace{25pt} (A) \deepregex\ data set} 
            \end{center}
         \end{minipage}
         \hspace{25pt}
         \begin{minipage}[t]{0.46\linewidth}
            \begin{center}
    \begin{tikzpicture}[scale=0.55]
        \begin{axis}[
            ymax=122,
            y=0.048cm,
            x=40,
            legend cell align = left,
            legend pos =  north west,
            xlabel style={yshift=1mm},
            ylabel = \# Solved Benchmarks,
            xlabel = \# of Iterations,
            xmax = 4.5
        ]
        \legend{\toolname, \toolvar, \deepregex}
        \addplot[
            smooth,
            line width=0.4mm,
            mark=square*,
            color=teal
        ] coordinates {
            (0, 53) 
            (1, 66)
            (2, 68)
            (3, 71)
            (4, 74)

        };
        \addplot[
            smooth,
            line width=0.4mm,
            mark=*,
            color=blue
        ] plot coordinates {
            (0, 8)
            (1, 14)
            (2, 17)
            (3, 17)
            (4, 18)
        };

        \addplot[
            smooth,
            line width=0.5mm,
            mark=triangle*,
            color=violet
        ] plot coordinates {
            (0, 3)
            (1, 3)
            (2, 3)
            (3, 3)
            (4, 3)
        };
        \end{axis}
    \end{tikzpicture}
    
    \captionsetup{font={small}}
    \caption*{\hspace{15pt} (B) StackOverflow data set}
            \end{center}
         \end{minipage}
         \caption{Number of solved benchmarks over iterations.}\figlabel{solved}
%
\hspace{-20pt}
            \begin{minipage}[t]{0.46\linewidth}
                \begin{center}
    \begin{tikzpicture}[scale=0.55]
        \begin{axis}[
            ymax=4,
            y=1.5cm,
            x=40,
            legend cell align = left,
            legend pos = outer north east,
            legend style = {
                at={(0.25,0.98)},
                legend columns=1,
                anchor=north
            },
            xlabel style={yshift=1mm},
            ylabel = Avg Time (s),
            xlabel = \# of Iterations,
            xmax = 4.5
        ]
        \legend{\toolname, \toolvar}
        \addplot[
            smooth,
            line width=0.4mm,
            mark=square*,
            color=teal
        ] coordinates {
            (0, 0.09)
            (1, 0.14)
            (2, 0.17)
            (3, 0.17)
            (4, 0.17)

        };
        \addplot[
            smooth,
            line width=0.4mm,
            mark=*,
            color=blue
        ] plot coordinates {
            (0, 0.61)
            (1, 1.57)
            (2, 1.86)
            (3, 1.91)
            (4, 1.91)

        };
        \end{axis}
    \end{tikzpicture}
    
    \captionsetup{font={small}}
    \caption*{\hspace{25pt} (A) \deepregex\ data set}
                \end{center}
             \end{minipage}
             \hspace{25pt}
             \begin{minipage}[t]{0.46\linewidth}
                \begin{center}
    \begin{tikzpicture}[scale=0.53]
        \begin{axis}[
            ymax=20,
            ymin=3,
            y=0.40cm,
            x=40,
            legend cell align = left,
            legend pos = outer north east,
            legend style = {
                at={(0.25,0.98)},
                legend columns=1,
                anchor=north
            },
            xlabel style={yshift=1mm},
            ylabel =  Avg Time (s),
            xlabel = \# of Iterations,
            xmax = 4.5
        ]
        \legend{\toolname, \toolvar}
        \addplot[
            smooth,
            line width=0.4mm,
            mark=square*,
            color=teal
        ] coordinates {
            (0, 3.7)
            (1, 6.1)
            (2, 5.9)
            (3, 6.7)
            (4, 6.5)

        };
        \addplot[
            smooth,
            line width=0.4mm,
            mark=*,
            color=blue
        ] plot coordinates {
            (0, 7.38)
            (1, 12.52)
            (2, 14.59)
            (3, 14.59)
            (4, 14.59)

        };
        \end{axis}
    \end{tikzpicture}
    
    \captionsetup{font={small}}
    \caption*{\hspace{15pt} (B) StackOverflow data set}
                \end{center}
             \end{minipage}

             \caption{Average running time per solved benchmark over iterations.  Time for \deepregex's \emph{seq2seq} model is negligible.}\figlabel{time}
             
        \end{center}
\end{figure}

Our results are summarized in Figures 14 and 15. For each figure, the $x$-axis shows the number of iterations  and the $y$-axis shows either the number of benchmarks that can be successfully solved (\figref{solved}) or the average running time per benchmark (\figref{time}). For each figure, (A) shows results for the \deepregex\ data set and (B) is for  StackOverflow. 
The green line (with squares) corresponds to \toolname, the blue line (with circles) is  \toolvar, and the violet line (with triangles) is  \deepregex. 
Because \deepregex\ only takes natural language as input, the \deepregex\ line in \figref{solved} is flat. Furthermore, since \deepregex\ does not involve any search, its running time is negligible and not shown  in \figref{time}. 

\paragraph{DeepRegex data set.} Let us first focus on the results for the \deepregex\ data set, shown in \figref{solved} (A) and \figref{time} (A). 
Given the original examples in this data set, \toolname\ can produce the \emph{intended} regexes for 151 out of 200 benchmarks (75.5\% accuracy). 
Furthermore, \toolname\ solves up to 185 benchmarks (92.5\%) when more examples are available. 
In comparison,  \deepregex\ solves 134 benchmarks (67\%), whereas \toolvar\ solves at most 66 benchmarks (33\%). 
Furthermore, as illustrated in \figref{time} (A), using  the natural language specification also substantially speeds up the PBE engine. 

\paragraph{StackOverflow data set.} Next, we consider the StackOverflow results  shown in \figref{solved} (B).
As expected, the accuracy is much lower compared to the \deepregex\ data set, as the StackOverflow benchmarks are much more challenging.\footnote{In particular, the average number of words in a StackOverflow benchmark is 26 whereas \deepregex\ benchmarks have 12 words on average. Furthermore, the average \revise{AST node} size of the target regex is 13 for the StackOverflow data set and 5 for the \deepregex\ data set.} Thus, the  two baselines (namely, \deepregex\ and \toolvar) can only solve  3 (2.4\%) and 18 benchmarks (14.7\%) respectively, out of 122 benchmarks in total. 
In contrast, \toolname\ is able to solve up to 74 benchmarks out of 122 (60.7\%). 

\paragraph{Failure analysis for StackOverflow.} \revise{ To gain insight about cases where \toolname\ does not work well, we investigate several StackOverflow benchmarks where \toolname\ fails to synthesize the intended regex.  Among the benchmarks we inspected, we notice that the English description in many of the failure cases rely on high-level concepts such as \emph{date}, \emph{range}, etc. that our semantic parser has no knowledge of; therefore, the generated sketch does not precisely capture the  English description in most failure cases.}

\vspace{3pt}
\noindent
\fbox{\parbox{.95\linewidth}{
	{\bf Result 1}: Among $322$ regex tasks, \toolname\ solves  80\% of the benchmarks but \deepregex\ solves only 43\% and the PBE-baseline solves only 26\%.
  }
}

\subsection{Evaluation of PBE engine}

\input{eval-technique}


In this section, we describe an ablation study that allows us to quantify the impact of the pruning techniques described in Sections~\ref{sec:over-under} and~\ref{sec:inferConstants}. Specifically, in Figure~\ref{fig:eval-technique}, we plot the number of solved sketches against cumulative running time {for \toolname\ and two other baselines.} \revise{In this context, a sketch is considered as solved if the PBE engine can  find an instantiation of the sketch that is consistent with the examples. In this experiment, we evaluate the following PBE engines:} 

\begin{itemize}[leftmargin=*]
\item {\emph{\mina:}} The plot labeled \mina\ is a baseline that implements the pruning techniques described in \mina~\cite{lee}. Specifically, we adapt \mina\ to perform sketch-guided enumerative search (instead of breadth-first search) but use their pruning technique instead of the ideas proposed in Sections~\ref{sec:over-under} and~\ref{sec:inferConstants}.
\item \emph{\secondBL:} This variant uses the pruning techniques described in Section~\ref{sec:over-under} but does not leverage the symbolic regex idea introduced in Section~\ref{sec:inferConstants}. 
\item \emph{\toolname:} This corresponds to the full \toolname\ system incorporating both ideas from Sections~~\ref{sec:over-under} and~\ref{sec:inferConstants}.
\end{itemize}

As we can see from Figure~\ref{fig:eval-technique}, both  pruning techniques 
discussed in Sections~\ref{sec:over-under} and~\ref{sec:inferConstants} have a significant positive impact on the running time of the synthesizer. 


\noindent
\fbox{\parbox{.95\linewidth}{
	{\bf Result 2}: For the first 1000 sketches that can be solved by all variants, \toolname\ is around $10\times$ faster than \mina\ and $2.5\times$ faster than \secondBL.
  }
}

\subsection{User study}

 To further evaluate whether \toolname\ helps users complete regex-related tasks, we conducted a user study involving 20 participants, 5 of whom are professional software engineers and 15 of whom are computer science students. 
 Each participant was provided with $6$ regex tasks randomly sampled from the StackOverflow benchmarks, \revise{regardless of whether \toolname\ is able to solve that benchmark or not}. \revise{Then, we provided each participant with the original task description  in the StackOverflow post (including both the English decription and the examples) and asked them to solve exactly a (randomly selected) half of the examples using \toolname \ and the remaining half without \toolname. } For both set-ups, the users had a total of $15$ minutes to work on each setting (with \toolname\ or without \toolname).  More details about our user study set-up can be found in the appendix}.

 For the set-up involving our tool, participants  were just provided with the tool and educated about how to use it, but they were not required to use \toolname\ in any specific way. \revise{Furthermore, while the participants were  provided with the original StackOverflow post describing the task, they were free to modify both the English description and the examples as they saw fit.}
 

\paragraph{Results.}  In the set up where participants did not have access to \toolname, they correctly solved  $28.3\%$ of the benchmarks (i.e., produced the intended regex) in the given time limit. In contrast, when they had access to \toolname, success rate went up to $73.3\%$.  As standard when doing user studies, we ran a $1$-tailed $t$-test to evaluate whether our results are statistically significant. The $p$-value for this test is less than $0.0000001$. Thus, our user study provides firm evidence that the proposed technique makes it easier for users to write regexes.

\vspace{-2pt}
\revise{\paragraph{Failure case analysis.} To gain some insight about failure cases in the user study, we manually inspected those scenarios in which users were not able to successfully use \toolname\ to derive the correct regex.  Overall, we found two main root causes for failure. First, because our tasks are randomly selected from the StackOverflow benchmarks, \toolname\ 
times out on some tasks and is unable to produce any regex. In such cases, solving the benchmark with \toolname\ is no different from solving the benchmark without \toolname. Another main reason for failure is the inherent ambiguity in the StackOverflow post. That is, even with the provided examples, there may be multiple ways to interpret the question, so the users sometimes take one interpretation over the intended one and therefore select the wrong regex. (Note that users in our study were not provided with follow-up questions and discussions in the original StackOverflow post.)}

\vspace{-2pt}
\revise{\paragraph{Disclaimers.} While we believe that our user study results provide some preliminary evidence of the potential usefulness of a \toolname-like approach, our results are not intended to be a scientific study of the use of \toolname ``in the wild'' for the following reasons. First, the majority of the participants in our user study are computer science students from the same university. Second, in order to allow a fair comparison between the two approaches across all participants, our tasks are taken from StackOverflow posts as opposed to real-world tasks that the participants \emph{themselves} want to complete. }



\vspace{3pt}
\noindent
\fbox{\parbox{.95\linewidth}{
	{\bf Result 3}: \revise{For the particular setup evaluated in our small user study, } \toolname\ users are $2\times$ more likely to construct the correct regex using \toolname\ within a given time budget. 
  }
}

\section{Related Work}

In this section, we review prior work on program synthesis from examples and natural language.

\vspace{-6pt}

\paragraph{Learning regexes from examples} 

There is a large body of prior research  on learning regular expressions from positive and negative examples~\cite{alquezar94, firoiu98, ANGLUIN1978337, gold1978, rivest1989, DFA1, DFA2}, including Angluin's  well-known $L^*$ algorithm for active learning of regular expressions~\cite{angluin1987}. In this setting, a regular language is represented by an oracle that can answer membership queries, check for equivalence, and provide counterexamples. \revise{While  these algorithms can learn the target  language in polynomial time (with respect to the minimal DFA), they tend to require orders of magnitude more examples compared to our approach. For instance, for the simple regex {\tt [A-Za-z]+}, an implementation of the $L^*$ algorithm asked 679 queries before it synthesized the correct regex whereas \toolvar\ was able to synthesize the desired regex using 8 examples.}



More recent work that is  closely related to our approach is \textsc{AlphaRegex} ~\cite{lee} which also performs top-down enumerative search and uses over- and under-approximations to prune the search space.  However, \textsc{AlphaRegex} can only synthesize regexes over a binary alphabet and does not utilize natural language. In contrast to  \textsc{AlphaRegex}, we use hierarchical sketches for both guiding the search and pruning infeasible regexes. Additionally, our method uses symbolic regexes and SMT-based reasoning to further prune the search space.  Another related tool is \textsc{RFixer}, which performs repair on regular expressions~\cite{rfixer}. Rather than performing synthesis from scratch, \textsc{RFixer} modifies a given regex to be consistent with the provided examples and also uses techniques similar to \textsc{AlphaRegex} to prune the search space.

\vspace{-6pt}
\paragraph{Learning regexes from language} 

There has been recent interest in automatically generating regexes from natural language.  
For example, \citet{KB13} build a dependency parser for translating natural language text queries into regular expressions. Their technique is built on top of a combinatory categorical grammar and  utilizes semantic unification to improve training. 
Other work in this space uses seq2seq models to predict regular expressions from English descriptions~\cite{deepregex,semregex}.  However, these techniques do not utilize examples and attempt to directly translate natural language into a regex rather than a sketch.

\vspace{-6pt}
\paragraph{Multi-modal synthesis} 
There has been recent interest in synthesizing string manipulation programs from both natural language and examples. 
For instance, \citet{manshadi13} propose a PBE system that leverages natural language in order to deduce the correct program more often and faster. Specifically, they use natural language to construct a so-called \emph{probabilistic version space} and apply this idea to string transformations expressible in a subset of the FlashFill DSL~\cite{flashfill}.
\citet{Raza} also use propose combining natural language and examples but do so in a very different way. Specifically,  they try to decompose the English description into constituent concepts and then ask the user to provide examples for each concept in the decomposition.


Similar to our approach, there have also been recent proposals to combine natural language and examples using a sketching-based approach. For instance, \cite{sketchadapt} provides a framework for generating program sketches from any type of specification, which can also involve natural language. Specifically, they first use an LSTM to generate a distribution over program sketches and then try to complete the sketch  using a generic sketch completion technique based on breadth-first enumeration. Another related effort in this space is the \textsc{Mars} tool which also utilizes natural language and examples \cite{mars}.  In contrast to our technique, they derive soft constraints from natural language and utilize a MaxSMT solver to perform synthesis. In addition, \textsc{Mars} targets data wrangling applications rather than regexes.

\vspace{-6pt}
\paragraph{PBE and sketching} 

Similar to this work, several recent PBE techniques combine top-down enumerative search with lightweight deductive reasoning to significantly prune the search space~\cite{lambda2, neo, morpheus, aws13, le14, yaghmazadeh16, osera15}.  Our method also bears similarities to sketching-based approaches~\cite{solarthesis} in two ways: First, we generate some sort of program sketch from the natural language description. However, in contrast to prior work, our sketches are hierarchical in nature, and the holes in the sketch represent arbitrary regexes rather than constants. Second,  we use a constraint-solving approach to infer constants in a symbolic regex. However, compared to most existing techniques~\cite{jha2010, gulwani2011, ahish2015, emina}, we use constraint solving as a way to rule out infeasible integer constants rather than directly solving for them.



\vspace{-6pt}
\paragraph{Program synthesis from NL} 

Beyond regexes, there have also been proposals for performing program synthesis directly from natural language~\cite{sqlDL, neuralprogram, lin2018}. Such techniques have been used to generate SQL queries~\cite{sqlizer,sqlDL}, ``if-this-then-that recipes'' \cite{ifttt},  spreadsheet formulas~\cite{nlyze}, bash commands~\cite{lin2018}, and Java expressions~\cite{viktor}.  Our technique is particularly similar to {\sc SQLizer}~\cite{sqlizer} in that we also infer a sketch from the natural language description. However, unlike our approach, {\sc SQLizer} does not utilize examples and populates the sketch using a different technique called \emph{quantitative type inhabitation}~\cite{type}.

\section{Conclusions and Future Work}

In this paper, we presented a new method, and its implementation in a tool called \toolname, to synthesize regular expressions from a combination of examples and natural language.  The key idea underlying our approach is to generate a hierarchical sketch from the English description and use the hints embedded in this sketch to guide both search and deduction.  We evaluated our approach on 322 regexes obtained from two different sources and showed that our approach can successfully synthesize the intended regex in 80\% of the cases within four user interaction steps. In comparison,  a state-of-the-art tool  that uses  only natural language can  solve 43\% of these benchmarks and an example-only baseline can solve only 26\%. We also performed an
evaluation of our PBE engine and showed that \toolname\ is an order of magnitude faster compared to \mina, a state-of-the-art PBE tool for regex synthesis. 




\revise{In future work, we are interested in exploring a multi-modal \emph{active learning} approach to synthesizing regular expressions. In our current work, \toolname\ produces top-$k$ results that satisfy the examples, but it is up to the user to inspect these results and provide more examples as needed. However, we believe  it would be  beneficial to develop a regex synthesis tool that would ask  the user membership queries to disambiguate between multiple different solutions that are consistent with the examples.  We are also interested in semantic parsing or other NLP techniques that might generate helpful feedback to users in cases where the generated sketch is too coarse. Finally, we plan to explore the use of the proposed synthesis methodology in application domains beyond regular expressions.}




\bibliography{main}

\begin{appendices}
    \section{Proofs}
    \begin{lemma}{\bf (Correctness of \textnormal{\textsc{Approximate}} in \figref{approx-sketch})}
        Given an h-sketch $\sketch$, \textnormal{\textsc{Approximate}} constructs $\langle  \overapprox, \underapprox  \rangle$ where $\overapprox$ over-approximates $\partialprog$ and $\underapprox$ under-approximates $\sketch$. That is, we have 
        \begin{center}
        $\textnormal{(i)} \  \forall s. \   (\exists r \in \semantics{\sketch}.\ \textsf{Match}(r, s)) \Rightarrow \textsf{Match}(\overapprox, s)$ \\
        $\textnormal{(ii)} \  \forall s. \   \textsf{Match}(\underapprox, s) \Rightarrow (\forall r \in \semantics{\sketch}.\ \textsf{Match}(r, s))$ 
        \end{center}
        \lemmalabel{approx2}
    \end{lemma}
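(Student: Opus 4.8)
The plan is to prove the two statements simultaneously by structural induction on the derivation of the sketch-approximation judgment $\vdash \sketch \twoheadrightarrow \langle \overapprox, \underapprox \rangle$ from \figref{approx-sketch}. Since the seven rules are syntax-directed and each premise is a strict subderivation, this induction is well-founded and uniformly handles the nesting of holes, the depth parameter $d$, and the recursion on the number of components inside a hole (rule 3). For each rule I would assume that (i) and (ii) hold for every sub-sketch appearing in a premise (the induction hypothesis) and establish them for the conclusion, reading $\semantics{\cdot}$ according to the h-sketch semantics in \figref{sketchsemantics}.

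The base cases are immediate. For a concrete regex $\prog$ (rule 6) we have $\semantics{\prog} = \{\prog\}$ and $\overapprox = \underapprox = \prog$, so both implications are trivial. For a deep hole $\hole_d\{\overline{\sketch}\}$ with $d>1$ (rule 2), the over-approximation $\top$ matches every string, discharging (i), and the under-approximation $\bot$ matches no string, discharging (ii) vacuously; the same vacuous argument on $\bot$ also settles the under-approximation for the repeat case (rule 7).

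The inductive cases for the non-negation operators (rules 1, 3, 4) rest on a monotonicity observation: each operator $\texttt{f} \in \mathcal{F}_n \setminus \{\texttt{Not}\}$ determines acceptance of a string $s$ by decomposing $s$ into substrings that its arguments must match (e.g. a split $s = s_1 s_2$ for \texttt{Concat}, the whole $s$ for \texttt{And}/\texttt{Or}, an arbitrary partition for \texttt{KleeneStar}). For (i), given a witnessing completion $\texttt{f}(\overline{\prog})$ matching $s$, I extract such a decomposition; each $\prog_i$ matches its assigned substring, so by the induction hypothesis $\overapprox_i$ matches it as well, and reassembling yields a match of $\texttt{f}(\overline{\overapprox})$ on $s$. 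For (ii), a match of $\texttt{f}(\overline{\underapprox})$ produces a single decomposition whose pieces are accepted by each $\underapprox_i$; by the induction hypothesis every $\prog_i \in \semantics{\sketch_i}$ accepts the same piece, and because the decomposition is shared across all choices, every completion $\texttt{f}(\overline{\prog})$ matches $s$. Rule 3 is the instance of this argument where the "operator" is set-union over the hole's components: I split $\hole_1\{\overline{\sketch}\}$ into $\sketch_1$ and $\hole_1\{\sketch_2,\ldots,\sketch_n\}$, use $\texttt{Or}$ for the over-approximation (some component matches) and $\texttt{And}$ for the under-approximation (every component matches), and appeal to the induction hypothesis on the strictly smaller hole; rule 1 is the $n=1$ termination of this recursion, where $\semantics{\hole_1\{\sketch\}} = \semantics{\sketch}$.

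Negation (rule 5) is the one case where the over- and under-approximations swap, and it is where I expect the argument to be most delicate, since it requires \emph{both} halves of the induction hypothesis used contrapositively. For (i), if some $\texttt{Not}(\prog)$ matches $s$ then $\prog$ rejects $s$; were $\underapprox$ to match $s$, part (ii) of the hypothesis would force \emph{every} completion, including this $\prog$, to match $s$, a contradiction, so $\underapprox$ rejects $s$ and $\texttt{Not}(\underapprox)$ matches $s$. Dually, for (ii) a match of $\texttt{Not}(\overapprox)$ means $\overapprox$ rejects $s$, so by the contrapositive of part (i) no completion matches $s$, whence every $\texttt{Not}(\prog)$ matches $s$. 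The remaining obligation is the over-approximation of the repeat case (rule 7): using that all \texttt{Repeat}-family operators require \emph{positive} counts, any completion $\texttt{g}(\prog,\overline{\kappa})$ concatenates at least one string matching $\prog$, hence at least one string matching $\overapprox$ by the hypothesis, so $\texttt{RepeatAtLeast}(\overapprox,1)$ matches $s$. The main care throughout is keeping the per-operator matching semantics explicit enough to justify the shared-decomposition step in (ii) and to track the over/under swap induced by \texttt{Not}.
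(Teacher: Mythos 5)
Your proof is correct and follows essentially the same route as the paper's: structural induction with trivial base cases for concrete regexes and deep holes, an \texttt{Or}/\texttt{And} split peeling one component off a depth-1 hole, a shared-decomposition argument for the non-negation operators, the over/under swap via contrapositive for \texttt{Not}, and $\langle \texttt{RepeatAtLeast}(\overapprox,1), \bot\rangle$ for the \texttt{Repeat} family. The only difference is organizational — you induct on the derivation of the $\twoheadrightarrow$ judgment, which handles the recursion on the number of hole components slightly more cleanly than the paper's labeling of $\hole_1\{\sketch_1\}$ as a ``base case'' — but the substance is identical.
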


    \begin{proof}
    
        We prove this by structural induction on $\sketch$, as follows:
        \begin{itemize}
            \item Base case: $\sketch$ is of the form $\hole_1\{\sketch_1\}$. By the rule (1) in \figref{expandalg}, we know that this hole must be instantiated by $\sketch_1$. Therefore the over and under approximation for this $\sketch$ is the over and under approximation of $\sketch_1$
            \item Base case: $\sketch$ is of the form $\hole_d\{\sketch_1, \ldots, \sketch_m\}$. This case is trivial from the definition of over and under approximation.
            \item Base case: $\sketch$ is of the form of a concrete regex $r$. This case is trivial because we don't have to do any over and under approximation. 
            \item Inductive case: $\sketch = \hole_1\{\sketch_1, \ldots,\sketch_m\}$. By induction hypothesis, the approximation for \\ $\hole_1\{\sketch_2, \ldots, \sketch_m\}$ is $\langle o', u' \rangle$, we now prove (i) and (ii) holds for $\langle \texttt{Or} (o, o'), \texttt{And}(u, u') \rangle$, where $\langle o, u \rangle$ is the approximation for $\sketch_1$.
            \begin{itemize}
                \item (i) holds for the over-approximation $\texttt{Or} (o, o')$. Given a string $s$ matched by a regex $r$ instantiated from $\sketch$, from the semantic of $\sketch$, we know that such $r$ is either instantiated from $\sketch_1$ or $\hole_1\{\sketch_2, \ldots, \sketch_m\}$. From the inductive hypothesis, if $r$ is instantiated from $\sketch_1$, then $\textsf{Match}(o,s)$ is true, and if $r$ is instantiated from $\hole_1\{\sketch_2, \ldots, \sketch_m\}$, $\textsf{Match}(o',s)$ is true. From the semantic of $\texttt{Or}$ operator, if $r \in \semantics{\sketch}$ and $r$ matches $s$, then $\texttt{Or} (o, o')$ is true. 
                \item (ii) holds for the under-approximation $\texttt{And}(u,u')$. Suppose the under-approximation matches a string $s$. From the semantics of the $\texttt{And}$ operator, we know that both $u$ and $u'$ match $s$. From the inductive hypothesis, we know that for all $r'$ instantiated by $\hole_1\{\sketch_2, \ldots, \sketch_m\}$, $\textsf{Match}(r',s)$ is true; also from the base case we know that all $r''$ instantiated by $\sketch_1$, $\textsf{Match}(r'',s)$ is true. From the semantics of $\sketch$, we know that all $r \in \semantics{\sketch}$, $r \in {r'}$ or $r \in {r''}$. Therefore $\forall r \in \semantics{\sketch}.\ \textsf{Match}(r,s)$ is true. 
    
            \end{itemize}     
            \item Inductive case: $\sketch = \texttt{f}(\sketch_1, \ldots, \sketch_n) \text{ where } \texttt{f} \in \mathcal{F}_n$. By induction, for each $\sketch_i (i = 1, \ldots, n)$, $\langle o_i, u_i \rangle$ satisfies (i) and (ii). Now we show that $\langle o, u \rangle$ satisfies (i), (ii) as well, by considering all possibilities of operator $\texttt{f}$
            \begin{itemize}
                \item $\sketch = \texttt{StartsWith}(\sketch_1)$.
                \begin{enumerate}
                    \item We first prove that $o$ satisfies (i). For any string $s$, suppose there exists a regex $\prog \in \semantics{\sketch}$ such that $\prog = \texttt{StartsWith}(\prog_1)$ such that we have $\textsf{Match}(\prog, s)$. From the semantic of h-sketch from \figref{sketchsemantics}, we know that $\prog_1 \in \semantics{\sketch_1}$. By induction, we know that (i) holds for $\sketch_1$. Thus, we have $\textsf{Match}(o_1, s_1)$ implies $\textsf{Match}(o,s) $ since $o$ is $\texttt{StartsWith}(o_1)$ according to rule (4) and $s_1$ is a prefix of $s$. Therefore $o$ satisfies (i).
                    \item We now prove that $u$ satisfies (ii). For any string $s$, suppose $\textsf{Match}(u,s)$ is true, then there exist a string $s_1$ such that $\textsf{Match}(u_1, s_1)$ and $u = \\\texttt{StartsWith}(u_1)$. From the inductive hypothesis, we know that $\textsf{Match}(u_1, s_1)$ holds for any $\prog_1 \in \semantics{\sketch_1}$. Now consider any regex $\prog \in \semantics{\partialprog}$, because we have $\prog = \texttt{StartsWith}(\prog_1)$, $s_1$ is a prefix of $s$ and $\textsf{Match}(\prog_1, s_1)$, we have $\textsf{Match}(\prog, s)$.
                \end{enumerate}
                \item $\sketch = \texttt{f}(\sketch_1, \ldots, \sketch_n) \text{ where } \texttt{f} \in \mathcal{F}_n $ is \texttt{Contains}, \texttt{EndsWith}, \texttt{Concat}, \texttt{And}, \texttt{Or}, \texttt{Optional}, \\\texttt{KleeneStar}. The proof is similar to that for \texttt{StartsWith}.
                \item $\sketch = \texttt{Not}(\sketch_1)$.
                \begin{enumerate}
                    \item We first prove $o$ satisfies (i). Given a string $s$, suppose there exists a concrete regex $\prog \in \semantics{\sketch}$ such that $\textsf{Match}(\prog, s)$ is true. From the semantic of $\texttt{Not}$, we know that $\neg \textsf{Match}(\prog_1, s)$, where $\prog_1 \in \semantics{\sketch_1}$. From the induction hypothesis, we know that $\\\neg \textsf{Match}(u_1, s)$, where $u_1$ is the under-approximation for $\sketch_1$, therefore, $o = \texttt{Not}(u_1)$ is true. Hence $\textsf{Match}(o, s)$ is true. 
                    \item We then prove $u$ satisfies (ii). For any string $s$, suppose $\textsf{Match}(u,s)$ is true. Since we have $u = \texttt{Not}(o_1)$, where $o_1$ is the over-approximation for $\sketch_1$, $\textsf{Match}(o_1, s)$ is false (from the semantics of $\texttt{Not}$). From the inductive hypothesis, we know that for any $\prog_1 \in \semantics{S_1}$, $\neg \textsf{Match}(r_1,s)$. Therefore, for any $\prog \in \semantics{S}$ where $\sketch = \texttt{Not}(\sketch_1)$, we have $\textsf{Match}(\prog, s)$. Therefore, $u$ satisfies (ii).
                \end{enumerate}
            \end{itemize}
            \item Inductive case: $\sketch = \texttt{g}(\sketch_1, \kappa_1, \ldots, \kappa_n) \text{ where } \texttt{g} \in \mathcal{G}_n$. Since $u = \bot$ clearly satisfies (ii), here we only prove that $o$ satisfies (i). Also since the cases for \texttt{RepeatAtLeast} and \texttt{RepeatRange} are similar, here we only prove for \texttt{Repeat}. Given any string $s$, suppose there exist $\prog \in \semantics{\sketch}$ such that $\prog = \texttt{Repeat}(\prog_1, \kappa)$ and $\textsf{Match}(\prog, s)$ is true, where $\prog_1 \in \semantics{S_1}$. From the semantic of the \texttt{Repeat} operator, we then know $\textsf{Match}(\prog_1, s_1)$, $\ldots$, $\textsf{Match}(\prog_1, s_\kappa)$ are true, where $s$ is the concatenation of $s_1, \ldots, s_\kappa$. From inductive hypothesis, we know $o_1$ matches $s_1, \ldots, s_\kappa$. From the semantic of \texttt{RepeatAtLeast}, we know that $o = \texttt{RepeatAtLeast}(o_1, 1)$ matches $s$, i.e. $\textsf{Match}(o, s)$ is true. Therefore $o$ satisfies (i).
        
        \end{itemize}

    \end{proof}

    \begin{theorem}{\bf (Correctness of \textnormal{\textsc{Approximate}} in \figref{approxrules})}
    Given a partial regex $\partialprog$,  \textnormal{\textsc{Approximate}} constructs $\langle  \overapprox, \underapprox  \rangle$ where $\overapprox$ over-approximates $\partialprog$ and $\underapprox$ under-approximates $\partialprog$. That is, we have 
    \begin{center}
    $\textnormal{(i)} \  \forall s. \   (\exists r \in \semantics{\partialprog}.\ \textsf{Match}(r, s)) \Rightarrow \textsf{Match}(\overapprox, s)$ \\
    $\textnormal{(ii)} \  \forall s. \   \textsf{Match}(\underapprox, s) \Rightarrow (\forall r \in \semantics{\partialprog}.\ \textsf{Match}(r, s))$ 
    \end{center}
    \theoremlabel{approx1} 
    \end{theorem}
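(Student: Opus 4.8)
The plan is to prove statements (i) and (ii) simultaneously by structural induction on the AST of the partial regex $\partialprog$, taking one case per inference rule of \figref{approxrules}. The two base cases are a leaf node and an h-sketch node. If $\textsf{Root}(\partialprog)$ is a nullary construct (a character class, $\epsilon$, or $\emptyset$), then rule (2) applies with $n = 0$ and returns $\langle \prog, \prog\rangle$; since $\semantics{\partialprog} = \{\prog\}$ has a unique completion, both (i) and (ii) hold trivially. If $\textsf{Root}(\partialprog) = v : \sketch$ is an h-sketch node, then rule (1) delegates to the auxiliary judgment $\vdash \sketch \twoheadrightarrow \langle \overapprox, \underapprox\rangle$ of \figref{approx-sketch}. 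Here I would observe that for a single node labeled with a sketch we have $\semantics{\partialprog} = \semantics{\sketch}$, so (i) and (ii) follow immediately from \lemmaref{approx2}, which is already established. Thus the base cases require no new reasoning beyond the sketch-level lemma.

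For the inductive cases I would first isolate the single fact that does the real work: every DSL operator outside $\{\texttt{Not}\}$ is \emph{monotone} with respect to language containment, while $\texttt{Not}$ is \emph{antitone}. Granting this, the case $\textsf{Root}(\partialprog) = \texttt{f} \in \mathcal{F}_n \setminus \{\texttt{Not}\}$ (rule (2)) goes as follows. By the induction hypothesis each child subtree $\partialprog_i$ has approximations $\langle \overapprox_i, \underapprox_i\rangle$ satisfying (i) and (ii). For (i), any completion $\prog = \texttt{f}(\prog_1,\dots,\prog_n) \in \semantics{\partialprog}$ matching $s$ decomposes, by the semantics of $\texttt{f}$, into substrings matched by the $\prog_i \in \semantics{\partialprog_i}$; the IH gives $\textsf{Match}(\overapprox_i, \cdot)$ on those substrings, and monotonicity of $\texttt{f}$ lifts this to $\textsf{Match}(\texttt{f}(\overline{\overapprox}), s)$. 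Statement (ii) is the dual: a string matched by $\texttt{f}(\overline{\underapprox})$ decomposes through substrings matched by every $\underapprox_i$, hence (IH) by every $\prog_i \in \semantics{\partialprog_i}$, hence by every completion of $\partialprog$. The $\texttt{Not}$ case (rule (3)) uses antitonicity: the swap $\langle \texttt{Not}(\underapprox_1), \texttt{Not}(\overapprox_1)\rangle$ is justified because $\textsf{Match}(\texttt{Not}(\underapprox_1), s)$ means $\underapprox_1$ rejects $s$, so by the contrapositive of the IH's (ii) no completion of $\partialprog_1$ accepts $s$, whence every completion $\texttt{Not}(\prog_1)$ accepts $s$; (ii) for $\texttt{Not}$ is the symmetric argument using $\overapprox_1$.

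The remaining two cases concern the $\texttt{Repeat}$ family $\texttt{g} \in \mathcal{G}_n$. When every integer argument is a concrete constant (rule (4)), only the first (regex) argument varies across completions, and the operator $\texttt{g}(\cdot,\overline{\nlabel})$ with fixed bounds is again monotone in that argument, so the argument is identical to the $\mathcal{F}_n$ case. When some integer argument is a symbolic integer (rule (5)), the under-approximation $\bot$ satisfies (ii) vacuously, and for (i) I would use that every repeat count is a \emph{positive} integer: any completion repeats its body at least once, so a matched string is a concatenation of at least one substring matched by the body, each of which $\overapprox_1$ matches by the IH, and therefore $\texttt{RepeatAtLeast}(\overapprox_1, 1)$ matches it. I expect the main obstacle to be the $\mathcal{F}_n$ inductive step: making the informal ``decomposition'' argument precise requires unfolding the concrete semantics of each operator ($\texttt{Concat}$, $\texttt{Or}$, $\texttt{And}$, $\texttt{StartsWith}$, etc.) and checking monotonicity uniformly, while carefully tracking the direction in which over- versus under-approximations propagate; the $\texttt{Not}$ swap is the one place where getting this direction right is genuinely delicate.
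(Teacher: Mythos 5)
Your proposal is correct and follows essentially the same route as the paper's proof: structural induction on $\partialprog$, with the h-sketch base case discharged by the auxiliary lemma on sketch approximation, the $\mathcal{F}_n$ cases handled by the monotone decomposition argument (which the paper spells out for \texttt{StartsWith} and declares ``similar'' for the rest), the \texttt{Not} case by the antitone swap, and the symbolic-\texttt{Repeat} case by the vacuity of $\bot$ together with positivity of repeat counts. Your framing of the inductive step around a single monotonicity/antitonicity observation is a slightly cleaner packaging of what the paper does operator by operator, but it is not a different argument.
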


    \begin{proof}
    We prove this by structural induction on $\partialprog$, as follows. 
    \begin{itemize}[leftmargin=*]
    \item 
    Base case: $\partialprog$ is an h-sketch $\sketch$. 
    This case is trivial according to rule (1) and \lemmaref{approx2}. 
    \item 
    Inductive case: $\partialprog$ is of the form $\texttt{f}(\partialprog_1, \dots, \partialprog_n)$ where $\texttt{f} \in \mathcal{F}_n$. 
    By induction, for each $\partialprog_i \ (i = 1, \dots, n)$, $\langle \overapprox_i, \underapprox_i \rangle$ satisfies (i) and (ii). 
    Now, we show that $\langle \overapprox, \underapprox \rangle$ satisfies (i) and (ii) as well, by considering all possibilities of operator \texttt{f}. 
    \begin{itemize}
    \item 
    $\partialprog = \texttt{StartsWith}(\partialprog_1)$. 
    We first prove that $\overapprox$ satisfies (i). 
    For any string $s$, suppose there exists a regex $\prog = \texttt{StartsWith}(\prog_1)$ such that we have $\textsf{Match}(\prog, s)$.  Then, we have $\textsf{Match}(\prog_1, s_1)$ where $s_1$ is some prefix of $s$. By induction, we know that (i) holds for $\partialprog_1$. Thus, we have $\textsf{Match}(\overapprox_1, s_1)$, which implies $\textsf{Match}(\overapprox, s)$ since $\overapprox$ is $\\\texttt{StartsWith}(\overapprox_1)$ according to rule (2) and $s_1$ is a prefix of $s$. Therefore, $\overapprox$ satisfies (i). 
    Now, we prove that $\underapprox$ satisfies (ii). 
    For any string $s$, suppose we have $\textsf{Match}(\underapprox, s)$. Since we have $\underapprox = \texttt{StartsWith}(\underapprox_1)$ according to rule (2), we have $\textsf{Match}(\underapprox_1, s_1)$ for some prefix $s_1$ of $s$. By induction, we know that $\textsf{Match}(\prog_1, s_1)$ holds for any $r_1 \in \semantics{\partialprog_1}$. Now consider any regex $\prog \in \semantics{\partialprog}$. Because we have $\prog = \textsf{StartsWith}(\prog_1)$, $s_1$ is a prefix of $s$ and $\textsf{Match}(\prog_1, s_1)$, we have $\textsf{Match}(\prog, s)$. Therefore, $\underapprox$ satisfies (ii). 
    \item 
    $\partialprog = \texttt{f}(\partialprog_1, \dots, \partialprog_n)$ where $\texttt{f} \in \mathcal{F}_n$ is \texttt{Contains}, \texttt{EndsWith}, \texttt{Concat}, \texttt{And}, \texttt{Or}, \texttt{Optional}, or \texttt{KleeneStar}. 
    The proof is similar to that for \texttt{StartsWith}. 
    \item 
    $\partialprog = \texttt{Not}(\partialprog_1)$. 
    We first prove that $\overapprox$ satisfies (i). 
    For any string $s$, suppose there exists a regex $\prog = \texttt{Not}(\prog_1)$ such that we have $\textsf{Match}(\prog, s)$. Then, we have $\neg\textsf{Match}(\prog_1, s)$. By induction, we know that (ii) holds for $\partialprog_1$. Thus, we have $\neg\textsf{Match}(\underapprox_1, s)$, or in other words, $\textsf{Match}(\texttt{Not}(\underapprox_1), s)$. This implies $\textsf{Match}(\overapprox, s)$ since $\overapprox$ is $\texttt{Not}(\underapprox_1)$ according to rule (3). 
    Now we prove that $\underapprox$ satisfies (ii). 
    For any string $s$, suppose we have $\textsf{Match}(\underapprox, s)$. Since we have $\underapprox = \texttt{Not}(\overapprox_1)$ according to rule (3), we have $\textsf{Match}(\texttt{Not}(\overapprox_1, s))$, or in other words, $\neg\textsf{Match}(\overapprox_1, s)$. By induction, we know that (i) holds for $\partialprog_1$. That is, for any regex $\prog_1 \in \semantics{\partialprog_1}$ we have $\neg\textsf{Match}(\prog_1, s)$. Therefore, for any $\prog \in \semantics{\partialprog}$ where $\partialprog = \texttt{Not}(\partialprog_1)$, we have $\textsf{Match}(\prog, s)$. Therefore, $\underapprox$ satisfies (ii). 
    \end{itemize}
    \item 
    Inductive case: $\partialprog$ is of the form $\texttt{g}(\partialprog_1, k_1, \dots, k_n)$ where $\texttt{g} \in \mathcal{G}_{n+1}$ and $k_i \in \mathbb{Z}^{+}$. 
    The proof is similar to that for \texttt{StartsWith}. 
    \item 
    Inductive case: $\partialprog$ is of the form $\texttt{g}(\partialprog_1, \kappa_1, \dots, \kappa_n)$ where $\texttt{g} \in \mathcal{G}_{n+1}$ and $\kappa_i$ is a \emph{symbolic integer}. 
    Since $\underapprox = \bot$ clearly satisfies (ii), here we only prove that $\overapprox$ satisfies (i). In particular, we prove $\overapprox$ satisfies (i) for \texttt{Repeat} and the proofs for \texttt{RepeatAtLeast} and \texttt{RepeatRange} are similar. 
    For any string $s$, suppose there exists a regex $\prog = \texttt{Repeat}(\prog_1, k)$ such that we have $\texttt{Match}(\prog, s)$. Then we have $\textsf{Match}(\prog_1, s_1), \dots,\\ \textsf{Match}(\prog_1, s_k)$ where $s$ is the concatenation of $s_1, \dots, s_k$. By induction, we know that $\textsf{Match}(\overapprox_1, s_1), \dots, \textsf{Match}(\overapprox_1, s_k)$, which implies $\textsf{Match}(\overapprox, s)$ since we have $\\\overapprox = \texttt{RepeatAtLeast}(\overapprox_1, 1)$ according to rule (4). Therefore, $\overapprox$ satisfies (i). 
    \end{itemize}
    \end{proof}

    \begin{theorem}{\bf (Correctness of \textnormal{\textsc{InferConstants}} in \figref{symint})}
    Suppose given a partial regex $\partialprog$, positive examples $\posexamples$ and negative examples $\negexamples$, \textnormal{\textsc{InferConstants}} returns $\exps$. 
    Then, for any concrete regex $\prog \in \semantics{\partialprog}$ that is consistent with $\posexamples$ and $\negexamples$, we have $\prog \in \exps$. 
    \end{theorem}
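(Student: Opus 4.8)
The plan is to prove a non-pruning (completeness) property: the only regexes \textsc{InferConstants} discards are ones it is entitled to discard, so every consistent completion of $\partialprog$ survives into $\exps$. Fix a concrete regex $\prog \in \semantics{\partialprog}$ that is consistent with $\posexamples$ and $\negexamples$. Since $\prog$ is concrete and $\partialprog$ is symbolic, $\prog$ is obtained from $\partialprog$ by a unique assignment $\theta = [\kappa_1 \mapsto c_1, \ldots, \kappa_n \mapsto c_n]$ of positive integers to the symbolic integers of $\partialprog$, where $1 \le c_i \le \textsf{MAX}$ (the values are bounded because $\prog$ matches a positive example and $\textsf{MAX}$ is the length of the longest one). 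First I would establish the crux lemma about \textsf{Encode}: for any symbolic regex $Q$ with $\textsf{Encode}(Q) = (\phi, x)$ and any assignment $\rho$ to the symbolic integers of $Q$, if $Q[\rho]$ matches a string $s$, then $\rho[x \mapsto \textsf{len}(s)] \models \phi$. This is a structural induction on $Q$ following the definition of $\Phi$ in \figref{constructrules}, one case per operator: the existentially quantified intermediate variables $x_i$ are witnessed by the lengths of the substrings matched by the sub-regexes, the nonlinear cases (\texttt{Repeat}, \texttt{RepeatAtLeast}, \texttt{RepeatRange}) are witnessed by the per-copy lengths, and \texttt{Not} is vacuous since its constraint is \emph{true}. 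Applying this lemma to each $s \in \posexamples$ with $\rho = \theta$ and conjoining yields $\theta \models \psi_0$.

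Next I would set up a loop invariant tying everything together. I claim that at every iteration, either $\prog \in \exps$ already, or the worklist contains a distinguished pair $(Q,\phi)$ such that (i) $\prog \in \semantics{Q}$ (i.e. $Q$ is $\partialprog$ with some subset of its symbolic integers already fixed to their $\theta$-values) and (ii) the restriction $\theta_Q$ of $\theta$ to the remaining symbolic integers of $Q$ satisfies $\phi$. Property (ii) guarantees the \textsf{UNSAT} test never discards this pair, since $\phi$ has the model $\theta_Q$. Property (i), together with \theoremref{approx1}, guarantees the \textsc{Infeasible} test never discards it: because $\prog$ is a completion of $Q$ consistent with the examples, the over-approximation $\overapprox$ of $Q$ cannot miss a positive example (by property (1) of \theoremref{approx1} applied to the witness $\prog$) and the under-approximation $\underapprox$ cannot match a negative one (by property (2)). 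Hence neither pruning step can remove the distinguished pair.

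Finally I would argue progress. When the distinguished $(Q,\phi)$ is dequeued and $\kappa = \textsf{ChooseSymInt}(Q)$ is picked with model value $\sigma[\kappa]$, the algorithm creates the refined pair $(Q[\kappa \mapsto \sigma[\kappa]],\ \phi[\kappa \mapsto \sigma[\kappa]])$ and the residual pair $(Q,\ \phi \wedge \kappa \neq \sigma[\kappa])$. There are two cases. If $\sigma[\kappa] = c_\kappa$, the refined pair carries the invariant forward with one fewer unfixed symbolic integer, using substitution-monotonicity: from $\theta_Q \models \phi$ and $\theta_Q[\kappa] = c_\kappa$ we get $\theta_{Q'} \models \phi[\kappa \mapsto c_\kappa]$. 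Otherwise $\sigma[\kappa] \neq c_\kappa$, and the residual pair carries the invariant forward, since $\theta_Q$ still satisfies $\phi \wedge \kappa \neq \sigma[\kappa]$. Because $\kappa$ ranges over the finite domain $[1,\textsf{MAX}]$ and each residual step blocks exactly one value, the second case can recur only finitely often before the solver is forced to return $\sigma[\kappa] = c_\kappa$; thus the first case eventually fires and the number of unfixed symbolic integers strictly decreases. After at most $n$ such decreases $Q$ becomes concrete and equals $\prog$, whereupon the \textsf{IsConcrete} branch places it in $\exps$.

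I expect the main obstacle to be this exhaustiveness/progress argument rather than the encoding lemma, which is routine case analysis. The delicate points are showing that the blocking clauses, combined with the bounded domain, force the correct value $c_\kappa$ to be proposed as a model before the residual branch becomes \textsf{UNSAT} (so the enumeration neither stalls nor terminates early), and that each partial substitution preserves the surviving model $\theta_Q$. The encoding-soundness lemma of the first paragraph is precisely what makes this bookkeeping go through, since it is what lets us assert that $\theta_Q$ remains a model of the specialized constraints all along the path leading to $\prog$.
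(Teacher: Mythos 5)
Your proposal is correct and follows essentially the same route as the paper's proof: both rest on the \textsc{Encode} soundness lemma to get that the true assignment satisfies $\psi_0$, an invariant that every consistent completion remains covered by $\exps$ together with the worklist (with the refined/residual split preserving coverage and \textsc{Infeasible} never discarding a consistent completion by \theoremref{approx1}), and termination by finiteness of the assignment space. The only difference is presentational --- you track a single distinguished worklist pair along the path to the fixed $\prog$, whereas the paper maintains the set-valued invariant over all consistent regexes at once --- and your progress argument spells out slightly more carefully why the blocking clauses must eventually force the correct value.
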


    \begin{proof}

    Let the set of concrete regexes represented by the state $(\partialprog, \phi)$ be $\semantics{\partialprog}_\phi$. 

    Suppose the constraint returned by the \textsc{Encode} procedure be $(\phi,x )$.
    At line $2$, we construct a new constraint $\psi$ by conjunction all the $\phi[len(s)/x]$ where each $s \in \posexamples$. 
    Therefore, from \theoremref{encode}, we know that any concrete regex $r \in \semantics{\partialprog}$ that is consistent with $\posexamples$ and $\negexamples$, $r \in \semantics{\partialprog_0}_\psi$.

    We show that (1) at the end of each iteration, for any regex $r \in \semantics{\partialprog}$ that is consistent with positive and negative examples, $r$ is either in $\exps$ or in $\semantics{P'}, \text{for any } (P',\phi) \in worklist$.
    \begin{itemize}
        \item Base case: iteration = 1, in this case the state pulled from the worklist is $(\partialprog_0, \psi)$. If $\psi$ is UNSAT, we know that none of the program defined by $\semantics{P_
        0}_\psi$ satisfy the examples, and therefore overall $P$ does not contain any correct regex that is consistent with positive and negative examples from the definition of $\psi$.
        For  $\psi$ that is satisfiable, if $\partialprog'$ is concrete then it is trivial that $\partialprog' \cup \semantics{P_0}_{\psi \wedge \kappa \neq \sigma[\kappa]}$ still contains all the $\prog \in \semantics{P}$. If $\partialprog'$ is infeasible, from the soundness of the $\textsc{Infeasible}$ procedure we know that none of the $\prog \in \semantics{P}$ such that $ \prog \in \semantics{P'}$. Therefore, all the correct $\prog \in \semantics{P_0}_{\psi \wedge \kappa \neq \sigma[\kappa]}$. And if $\partialprog'$ is feasible, notice that $\semantics{P_0}_{\psi \wedge \kappa \neq \sigma[\kappa]} \cup \semantics{P'}_{\psi[\kappa \vartriangleleft \sigma[\kappa]]} = \semantics{P_0}_{\psi}$, therefore (1) still holds. 
        \item Inductive case: Suppose for $iteration = 2, \ldots , n$, (1) all holds, we now prove that (1) holds for the $n+1^{th}$ iteration. Let the state pulled from the worklist at this iteration be $(\partialprog_n, \phi_n)$.  If $\phi_n$ is UNSAT, then we know $\semantics{\partialprog_n}_{\phi_n}$ is a empty set. From the inductive hypothesis (1) holds for the $n^{th}$ iteration and therefore (1) still holds for $n+1$ iteration in this case. The argument for proving other cases are similar as the base case. 
    \end{itemize}

    We now show that (2) the worklist algorithm will exhaust all the possible assignments of $\psi$. Observe from line 8-12, at each iteration we replace each state $(\partialprog, \phi)$ with either a state that is more restrictive by blocking one possible assignment for one symbolic integer, or reduce the number of symbolic integer in $\partialprog$ by one while the possible assignments defined by $\phi$ is the same for rest of the symbolic variables. Also since that the total number of possible assignment for $\partialprog$ defined by the constraint $\psi$ is finite, and the number of symbolic integer allowed is finite. Eventually, this algorithm will exhaust all the possible assignments of symbolic integer of program $\partialprog$ constraint on $\psi$.

    Combining (1) and (2),  we know that  the worklist will terminates (i.e. the worklist set is empty) and any correct regex $\prog \in \semantics{P}$ is either in the $\exps$ or in $\semantics{P'}_\phi$ for any  $(\partialprog', \phi) \in \text{worklist}$, we prove that for any $\prog \in \semantics{P}$ that is consistent with 
    $\posexamples$ and $\negexamples$, $\prog \in \exps$, when \textsc{InferConstants} returns $\exps$.
    \end{proof}

    \begin{theorem}{\bf (Correctness of \textnormal{\textsc{Encode}} in \figref{constructrules})}
    Suppose \textnormal{\textsc{Encode}} returns $(\phi, x)$ for a given symbolic program $\partialprog$ with $n$ symbolic integers $\kappa_1, \dots, \kappa_n$. 
    Then given a string $s$, if regex $\partialprog[\kappa_1 \annot k_1, \dots, \kappa_n \annot k_n]$ matches $s$ (where $k_i \in [1, \textsf{MAX}], i = 1, \dots, n$), 
    we have $\kappa_1 = k_1, \dots, \kappa_n = k_n$ is a satisfying assignment of $\phi[len(s) / x]$. 
    \end{theorem}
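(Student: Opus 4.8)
The plan is to prove a slightly generalized statement by structural induction on the symbolic regex $\partialprog$, following the derivation of the encoding judgment $\partialprog \hookrightarrow (\phi, x)$ in \figref{constructrules}. The statement as given is the special case where the subtree is the entire tree, but the induction requires a hypothesis about arbitrary subtrees. Concretely, the invariant I would carry is: for every subtree $\partialprog'$ with $\partialprog' \hookrightarrow (\phi', x')$ and occurring symbolic integers $\kappa_{j_1}, \dots, \kappa_{j_m}$, and for every instantiation $k_{j_1}, \dots, k_{j_m} \in [1, \textsf{MAX}]$, if the concrete regex $\partialprog'[\kappa_{j_1} \annot k_{j_1}, \dots]$ matches a string $s'$, then the assignment sending $x' \mapsto \textsf{len}(s')$ and each $\kappa_{j_i} \mapsto k_{j_i}$ satisfies $\phi'$ (the remaining, existentially quantified, length variables being witnessed as described below). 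Since every internal length variable introduced by $\Phi$ is fresh and existentially bound, establishing this invariant at the root yields exactly that $\phi[\textsf{len}(s)/x]$ is satisfied under $\kappa_i = k_i$.

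The base cases and the ``non-arithmetic'' inductive cases are routine. For a character class (rule 2) the constraint is $x' = 1$, and since a character class matches only length-one strings we have $\textsf{len}(s') = 1$. For a symbolic-integer leaf (rule 3) the constraint $1 \le \kappa \le \textsf{MAX}$ holds by the in-range assumption. For the containment operators \texttt{StartsWith}, \texttt{EndsWith}, \texttt{Contains}, a match on $s'$ exposes a prefix/suffix/substring $s'_1$ matched by the child; I would witness the child's existential length variable $x_1$ by $\textsf{len}(s'_1) \le \textsf{len}(s')$, so that $x' \ge x_1$ holds and $\phi_1$ is discharged by the induction hypothesis. The cases \texttt{Concat}, \texttt{Or}, \texttt{And}, \texttt{Optional}, \texttt{KleeneStar} are analogous, splitting $s'$ into the relevant pieces and witnessing each child length variable by the length of the piece it matches; \texttt{Not} is immediate since $\Phi(\texttt{Not}, \dots) = \emph{true}$. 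The only mild wrinkle is an $\epsilon$-match of \texttt{Optional} or \texttt{KleeneStar} through the empty branch, where the conjunct $\phi_1$ must still be witnessed; this is fine because a non-degenerate child matches some string and hence its constraint is satisfiable, while the degenerate case makes the antecedent vacuous.

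The main obstacle, and the only place with genuine content, is the \texttt{Repeat} family, because the $\Phi$ definitions introduce the non-linear products $x_1 \cdot x_2$ and, crucially, the repeated copies of the child regex may have \emph{different} lengths. Suppose $\partialprog' = \texttt{RepeatAtLeast}(\partialprog_1, \kappa)$ with $\kappa \mapsto k$, and the instantiated regex matches $s' = s_{(1)} \cdots s_{(j)}$ with $j \ge k$, each $s_{(i)}$ matching the instantiated $\partialprog_1$. The formula to satisfy is $\exists x_1.\ (x' \ge x_1 \cdot \kappa) \wedge \phi_1 \wedge (1 \le \kappa \le \textsf{MAX})$. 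The key idea is to witness $x_1$ by the length of a \emph{shortest} copy $s_{(i^\star)}$: then $\textsf{len}(s') = \sum_i \textsf{len}(s_{(i)}) \ge j \cdot \textsf{len}(s_{(i^\star)}) \ge k \cdot \textsf{len}(s_{(i^\star)})$, so $x' \ge x_1 \cdot \kappa$ holds, while $\phi_1[\textsf{len}(s_{(i^\star)})/x_1]$ holds by the induction hypothesis applied to the copy $s_{(i^\star)}$. For \texttt{Repeat} and \texttt{RepeatRange}, which additionally assert an upper bound $x' \le x_1' \cdot x_?$, I would witness the second existential $x_1'$ by the length of a \emph{longest} copy $s_{\max}$, giving $\textsf{len}(s') = \sum_i \textsf{len}(s_{(i)}) \le j \cdot \textsf{len}(s_{\max}) \le (\text{upper repeat count}) \cdot \textsf{len}(s_{\max})$; the conjuncts $\phi_1$ and $\phi_1[x_1'/x_1]$ are then discharged by the induction hypothesis on the shortest and longest copies respectively, and the range constraints on the (constant or symbolic) repeat counts hold by assumption.

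Finally, instantiating the invariant at the root of $\partialprog$ yields precisely the theorem: the witness choices above show that $\phi[\textsf{len}(s)/x]$ is satisfied under $\kappa_i = k_i$. This is exactly the soundness direction invoked by \textsc{InferConstants} (via \theoremref{encode}), guaranteeing that every feasible concretization survives in the constraint $\psi_0$; note that only this ``match $\Rightarrow$ constraint'' direction is needed, since $\phi$ is deliberately an over-approximation and the converse is not claimed.
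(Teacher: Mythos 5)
Your proposal is correct and follows essentially the same route as the paper's proof: structural induction on the symbolic regex, discharging each existentially quantified length variable with the length of the actual matched piece, and in the \texttt{Repeat} family witnessing $x_1$ and $x_1'$ by the minimum and maximum lengths of the repeated copies, exactly as the paper does. You simply spell out more of the cases (the symbolic-integer leaf, the containment operators, and the $\epsilon$-branches of \texttt{Optional}/\texttt{KleeneStar}) that the paper dismisses as ``similar'' or ``obvious.''
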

    \theoremlabel{encode}

    \begin{proof}
    We proof this by structural induction on $\partialprog$, as follows. 
    \begin{itemize}[leftmargin=*]
    \item 
    Base case: $\partialprog$ is a character class $c$. 
    This holds obviously since the length of any string that is matched by $c$ is 1. 
    \item 
    Inductive case: $\partialprog$'s root is annotated with an operator. 
    \begin{itemize}
    \item 
    The operator is $\texttt{f} \in \mathcal{F}_n$. Here, we only show how to prove the case where $\texttt{f}$ is \texttt{Concat} (other cases are similar). 
    Given symbolic program $\partialprog = \texttt{Concat}(\partialprog_1, \partialprog_2)$ with $\kappa_1, \dots, \kappa_n$, suppose $\textsc{Encode}(\partialprog)$ returns $(\phi, x)$. Now, we show that, if regex $\partialprog[\kappa_1 \annot k_1, \dots, \kappa_n \annot k_n]$ matches string $s$, then $\kappa_1 = k_1, \dots, \kappa_n = k_n$ is a satisfying assignment of $\phi[len(s) / x]$. Without loss of generality, let us assume $\partialprog_1$ uses $\kappa_1, \dots, \kappa_m$ and $\partialprog_2$ uses $\kappa_{m+1}, \dots, \kappa_n$. We also  assume $\partialprog_1[\kappa_1 \annot k_1, \dots, \kappa_m \annot k_m]$ matches $s_1$ and $\partialprog_2[\kappa_{m+1} \annot k_{m+1}, \dots, \kappa_n \annot k_n]$ matches $s_2$ where $s$ is a concatenation of $s_1$ and $s_2$. Since $\textsc{Encode}(\partialprog_i)$ returns $(\phi_i, x_i)$, by induction we know that $\kappa_1 = k_1, \dots, \kappa_m = k_m$ is a satisfying assignment of $\phi_1[len(s_1) / x_1]$ and $\kappa_{m+1}, \dots, \kappa_n$ is a satisfying assignment of $\phi_2[len(s_2) / x_2]$. Now we show that $\kappa_1 = k_1, \dots, \kappa_n = k_n$ is a satisfying assignment of $\phi[len(s) / x]$ where $\phi$ is $\exists x_1, x_2. \ x = x_1 + x_2 \wedge \phi_1 \wedge \phi_2$. This obviously holds because we have $len(s) = len(s_1) + len(s_2)$, $\phi_1[len(s_1) / x_1] = true$ and $\phi_2[len(s_2) / x_2] = true$.  
    \item 
    The operator is $\texttt{g} \in \mathcal{G}_n$. Here, we only show how to prove the case where $\texttt{g}$ is \texttt{Repeat} (other cases are similar). 
    Given a symbolic program $\partialprog = \texttt{Repeat}(\partialprog_1, \kappa)$ where $\partialprog_1$ has symbolic integers $\kappa_1, \dots, \kappa_n$, suppose $\textsc{Encode}(\partialprog)$ returns $(\phi, x)$. Now we show that if regex $\partialprog[\kappa \annot k, \kappa_1 \annot k_1, \dots, \kappa_n \annot k_n]$ matches string $s$, then $\kappa = k, \kappa_1 = k_1, \dots, \kappa_n = k_n$ is a satisfying assignment of $\phi[len(s) / x]$. Suppose $\textsc{Encode}(\partialprog_1)$ returns $(\phi_1, x_1)$. Since $\partialprog[\kappa \annot k, \kappa_1 \annot k_1, \dots, \kappa_n \annot k_n]$ matches string $s$, we know  that $\partialprog_1[\kappa_1 \annot k_1, \dots, \kappa_n \annot k_n]$ must match $s_1, \dots, s_k$ where $s$ is the concatenation of $s_1, \dots, s_k$. By induction, we have that $\kappa_1 = k_1, \dots, \kappa_n = k_n$ is a satisfying assignment of $\phi_1[len(s_1) / x_1], \dots, \phi_1[len(s_n) / x_1]$. 
    Now we show that $\kappa = k, \kappa_1 = k_1, \dots, \kappa_n = k_n$ is a satisfying assignment of $\phi[len(s) / x]$ where $\phi$ is $\exists x_1, x'_1. \ (x \geq x_1 \kappa \wedge x \leq x'_1 \kappa )  \wedge \phi_1 \wedge \phi'_1[x'_1 / x_1] \wedge \phi_2$. This obviously holds (consider $x_1 = \emph{min}\{ len(s_1), \dots, len(s_k)\}$ and $x'_1 = \emph{max}\{ len(s_1), \dots, len(s_k) \}$). 
    \end{itemize}
    \end{itemize}
    \end{proof}

\section{Semantics for the Regex DSL}\label{appendix:dslsemantic}

    \footnotesize
        \[
            \begin{array}{l}
                \semantics{c} s = (s = c) \\ 
                \semantics{\texttt{StartsWith}( \prog )} s = \exists j. \  0 \le j < |s|. \ \semantics{\prog} s', \text{where } s'= s[0,j] \\
                \semantics{\texttt{EndsWith}( \prog )} s = \exists j. \ 0 \le j < |s|. \  \semantics{\prog} s', \text{where } s'=s[j,|s|-1]\\
                \semantics{\texttt{Contains}( \prog )} s = \exists i, j. \ 0 \le i \le j < |s|. \ \semantics{\prog} s', \text{where } s'=s[i,j] \\
                \semantics{\texttt{Not}( \prog )} s = \neg \semantics{\prog} s \\
                \semantics{\texttt{Optional}( \prog )} s = \ (s =  \epsilon \lor \semantics{\prog} s) \\
                \semantics{\texttt{Concat}( \prog_1, \prog_2 )} s = \exists j. \ 1 \leq j < |s|. \  \semantics{\prog_1} s_1 \land  \semantics{\prog_2} s_2, \\ \hspace{100pt}  \text{where } s_1=s[0,j], s_2=s[j+1,|s|-1] \\ 
                \semantics{\texttt{Or}( \prog_1, \prog_2)} s = \semantics{\prog_1} s \lor \semantics{\prog_2} s\\
                \semantics{\texttt{And}( \prog_1, \prog_2)} s = \semantics{\prog_1} s \land \semantics{\prog_2} s \\ 
                \semantics{\texttt{Repeat} ( \prog, k)} s = \left\{ 
                \begin{array}{lr}
                    \hspace{-5pt}
                    \begin{array}{lr}
                        \hspace{12pt}
                        \vspace{5pt}
                        \semantics{\prog} s  \ \ \ \ \ \ \ k = 1 \\
                        \exists j. \ 1 \le j < |s|. \  \semantics{\prog}s_1 \land \semantics{\texttt{Repeat} (\prog, k - 1)} s_2, \\ \hspace{10pt} s_1 = s[0, j], s_2=s[j+1, |s|-1] \ \ \ \ \ \ \ \text{otherwise}\\ 
                    \end{array}
                \end{array}
                \right. \\
                \semantics{\texttt{RepeatRange} (\prog, k_1, k_2)} s = \bigvee_{k_1 \leq k \leq k_2} \semantics{\texttt{Repeat} ( \prog, k)}s \\
                \semantics{\texttt{RepeatAtLeast} (\prog, k_1)} s = \bigvee_{k_1 \leq k \leq \infty} \semantics{\texttt{Repeat} (\prog, k)} s \\
                \semantics{\texttt{KleeneStar}( \prog )} s = \  (s = \epsilon) \lor \bigvee_{1 \leq k \leq \infty} \semantics{\texttt{Repeat} (\prog, k)} s \\ 
            \end{array}
        \]
\normalsize

\section{DeepRegex Data Set Details and Benchmark Collection Procedure}\label{appendix:drcollection}

In this section, we provide details about the \deepregex  dataset. Originally, \deepregex\ authors collected this dataset using the following methodology: 
First, they programmatically generate regular expressions and the corresponding synthetic natural language descriptions using a synchronous context-free grammar. Then, they ask Amazon Mechanical Turkers to paraphrase the synthetic English description in a way that sounds more natural \cite{wang2015building}. Using this methodology, they collect a total of 10,000 benchmarks consisting of both a natural language description and the corresponding regex.

However, for our purposes, there are three issues with the original \deepregex\ data set. First, since \deepregex\ does not utilize examples, these benchmarks do not contain any positive/negative string examples for the target regex. Second, the data set is quite noisy: for many of the benchmarks, the regex does not match the  description due to errors introduced during paraphrasing. Third, since the target regexes are randomly generated, most benchmarks are not very representative of string matching tasks that arise in the real world. For example, for approximately $1,400$ of the $10,000$ benchmarks, the generated regex actually corresponds to the empty language.

For the reasons explained above, we could not use the \deepregex\ data set as is for our purposes; however, we were able to adapt it and construct a suitable data set of $200$ benchmarks using the following methodology. First, we  removed all regexes that do not accept any strings. 
While this modification still does not guarantee that the resulting data set is completely representative of real-world tasks, 
it eliminates benchmarks that are completely unrealistic. 
Then, among the remaining benchmarks, we randomly sampled $800$ tasks and asked people at our institution to provide examples that they think best describe the desired task by \emph{only} looking at their English descriptions. In particular, we asked the users to provide \emph{up to}  $7$ (and no less than $2$) positive and $7$ (and also no less than $2$) negative examples for each benchmark. 

This process yielded $800$ benchmarks consisting of a natural language description, a target regex, and a set of positive/negative string examples. However, because the annotators did not see the ground truth regex, the labeled examples may not be consistent with it. If a benchmark had more than 3 incorrect examples\footnote{To clarify,  ``incorrect'' means that a negative example provided by the user  is accepted by the regex  or a positive example is not accepted by it}, we assume it is poorly paraphrased and discard it. Otherwise, if there are two or fewer incorrect examples, we simply discard the bad examples. We believe this also removes noise in the \deepregex\ data set and helps select those benchmarks whose natural language description, examples, and target regex are all compatible. 
Using this methodology, we managed to create a data set of $200$ benchmarks, consisting of the natural language description, 4-14 positive/negative examples, and a target regex.  On average, each benchmark contains 4 positive examples and 5 negative examples.

The \deepregex\ data set is included in the \emph{deepregex} folder of the supplementary materials.

\section{StackOverflow Data Set Details and Benchmark Collection Procedure}

We collected our StackOverflow data set using the following procedure. First,  we searched StackOverflow posts using keywords such as ``regex'', ``regular expression'', ``text validation'', ``password validation'' and retained all posts that satisfy the following criteria: (1) The post must contain both an English description of the task as well as positive and negative examples; (2) All the information relevant to the benchmark (i.e. examples, description and the solution) must be consistent with each other.  Using this methodology, we obtained a total of 122 benchmarks covering several categories of tasks, such as number matching, phone number matching, password validation and etc.

Since the original StackOverflow posts are quite noisy, we also pre-process these 122 benchmarks using the following methodology:  First, since many posts contain a description of the user's attempted solution, we removed such irrelevant parts of the post (e.g., ``I tried this regex but it's not working''). We also fixed typos in the English description and   added quotations around constants -- e.g., if the question text says ``write a regex for strings starting with .'', we would put the dot symbol in quotation marks. In addition, some of the benchmarks   involve visual formatting (e.g., ``key = value'') that cannot be parsed using NLP techniques. In such cases, we parse the visual format into a sketch outline and parse the description with respect to each part of the visual format independently.

We also write the ground truth for each benchmark in our DSL in order to check equivalence between the ground truth regex and the synthesized regex. The reason that we cannot do the opposite is that the library we used to check regex equivalence, \textsc{Automaton} \cite{automaton}, does not accept some constructs that show up in the standard regex, such as lookahead, while \textsc{Automaton} can accept any regex that is in our DSL (which means it accepts operators such as {\tt And} and {\tt Not}). 

The StackOverflow data set is included in the \emph{stackoverflow} folder of the supplementary materials. 

\section{Training for Each Data Set}

In order to train our semantic parser on labeled training data, we need to generate sketch labels for a given natural language description. 
For the \deepregex\ data set, we generate these  sketch labels from the target regex. Specifically, given a target regex $r$, we replace  the root operator \texttt{op} in $r$ with a hole whose components are \texttt{op}'s arguments. Following~\cite{deepregex}, we train Sempre on 6500 English sentences that are separate from 200 \deepregex\ benchmarks that we use to evaluate \toolname. While training, we set beam size to be 500 and batch size to be 1.

For the StackOverflow data set, we manually write sketch labels in a way that mimics the structure of the English utterance. For example, consider the sentence ``\emph{the input  box should accept only if either first 2 letters alpha +6 numeric or 8 numeric}''. 
The manually-written  h-sketch for this utterance is 
\footnotesize
$
    \texttt{Or} \big( \hole\{\texttt{Repeat}(\texttt{<let>,2}),\texttt{Repeat}(\texttt{<num>,6})\},\hole\{\texttt{Repeat}(\texttt{<num>,8})\} \big)
$
\normalsize
which contains  key building blocks like \texttt{Repeat}(\texttt{<let>,2}) of the target regex and indicates that the top-level construct is an \texttt{Or}. To train Sempre, we use 5-fold cross validation by dividing the data set into 5 non-overlapping folds and train on 4 folds while testing on the left-out fold. This procedure ensures that we never train on test data. For each fold, we train for 5 epochs, utilizing a beam size of 500 and a batch size of 1.

\section{User Study Procedure}\label{appendix:user}
 All benchmarks in our user study are randomly sampled from the StackOverflow data set.  We started 
 each user study session by first describing the task that the participants need to accomplish. In particular, the goal of the participants is to solve 6 regex-related tasks, and they are asked to solve three of these tasks with the help of \toolname\ and the remaining 3 regex tasks without  \toolname. 
 In order to minimize the effect knowledge transfer, we randomize whether a  participant is given access to \toolname\ for the first 3 tasks or the latter 3 examples.

After explaining the task, we next provided the participants with  a ``cheat-sheet'' that includes both standard regular expression syntax as well as our DSL. This cheat-sheet also further illustrates the semantics of each operator using  positive and negative examples. Each participant  was given 5 minutes to look over the cheat-sheet and familiarize themselves with this syntax. Afterwards, we gave participants a short demo (approximately 10 minutes) illustrating how one can use \toolname\ to generate a sample regex. The demo shows a simple manually-crafted regex task that is  not taken from the StackOverflow benchmarks. 

Once the participants understood the procedure, we asked them to complete the assigned tasks. In the setting involving \toolname, the workflow is similar to how we run the interactive \toolname\ presented in the evaluation section. Specifically, given the natural language description and examples, \toolname\ returns the top-3 synthesized regexes to the participant. Then, the user can either choose one of the returned regexes as the solution (if they think it is the intended one), or enter two more new examples. The participants are allowed to repeat this process as many times as they like within the given time limit. The users can also also provide a solution directly if they feel that they don't need another iteration of \toolname. For the setting not involving \toolname, we allow the participants to use Internet and the ``cheat-sheet'' that we provided at the beginning of the session as references. In particular, the participants are allowed to use \emph{any other resource} of their choice as long as they do not search for this particular regex task. In both settings, the participants have a total of $15$ minutes to finish the $3$ given regex tasks, and we do not restrict the time that they spend on each individual task. 

At the end of the session, we went over the participants' solutions and collected data on how many of the tasks they  successfully solved (i.e., provide a solution that matches the ground truth) with and without \toolname.


\end{appendices}


\end{document}